\newcommand{\myparskip}{3pt}
\newtheorem{lemma}{Lemma}[section]
\newtheorem{theorem}[lemma]{Theorem}
\newtheorem{corollary}[lemma]{Corollary}
\newtheorem{proposition}[lemma]{Proposition}
\newtheorem{remark}[lemma]{Remark}
\newenvironment{proofof}[1]{\smallskip\noindent{\bf Proof of #1:}}%
        {\hspace*{\fill}$\Box$\par}
\newcommand{\etal}{{\em et al.}\ }
\newcommand{\assign}{\leftarrow}
\newcommand{\eps}{\epsilon}
\newcommand{\kec}[1]{$k$-$#1${\sc EC} }
\newcommand{\kvc}[1]{$k$-$#1${\sc VC} }
\newcommand{\ke}{\kec{2}}
\newcommand{\kv}{\kvc{2}}
\newcommand{\densV}{dens-$2${\sc VC} }
\newcommand{\densLP}{{\bf LP-dens} }
\newcommand{\ceil}[1]{\lceil #1 \rceil}
\newcommand{\opt}{\textrm{\sc OPT}}
\newcommand{\dens}[1]{\emph{density}(#1)}
\DeclareMathAlphabet{\mathpzc}{OT1}{pzc}{m}{it}
\newcommand{\script}[1]{\mathcal{#1}}
\begin{document}

\title{Min-Cost $2$-Connected Subgraphs With $k$ Terminals}
\author{
  Chandra Chekuri\thanks{Dept. of Computer Science, University of Illinois, Urbana,
    IL 61801. Partially supported by NSF grants CCF 07-28782 and CNS-0721899, and a
    US-Israeli BSF grant 2002276. {\tt chekuri@cs.uiuc.edu}}
  \and 
  Nitish Korula\thanks{Dept. of Computer Science, University of Illinois, Urbana,
    IL 61801.  Partially supported by NSF grant CCF 07-28782. {\tt
      nkorula2@uiuc.edu}} }
\date{}
\maketitle

\begin{abstract}
  In the \kv problem, we are given an undirected graph $G$ with edge
  costs and an integer $k$; the goal is to find a minimum-cost
  2-vertex-connected subgraph of $G$ containing at least $k$
  vertices. A slightly more general version is obtained if the
  input also specifies a subset $S \subseteq V$ of {\em terminals} and
  the goal is to find a subgraph containing at least $k$ terminals.
  Closely related to the \kv problem, and in fact a special case of
  it, is the \ke problem, in which the goal is to find a minimum-cost
  2-edge-connected subgraph containing $k$ vertices. The \ke problem
  was introduced by Lau \etal \cite{LauNSS07}, who also gave a
  poly-logarithmic approximation for it. No previous approximation
  algorithm was known for the more general \kv problem.  We describe
  an $O(\log n \cdot \log k)$ approximation for the \kv problem.
\end{abstract}

\section{Introduction}
\label{sec:intro}

Connectivity and network design problems play an important role in
combinatorial optimization and algorithms both for their theoretical
appeal and their many real-world applications. An interesting and
large class of problems are of the following type: given a graph
$G(V,E)$ with edge or node costs, find a minimum-cost subgraph $H$ of
$G$ that satisfies certain connectivity properties. For example, given
an integer $\lambda > 0$, one can ask for the minimum-cost spanning
subgraph that is $\lambda$-edge or $\lambda$-vertex connected. If
$\lambda=1$ then this is the classical minimum spanning tree (MST)
problem. For $\lambda > 1$ the problem is NP-hard and also APX-hard to
approximate. More general versions of connectivity problems are
obtained if one seeks a subgraph in which a subset of the nodes $S
\subseteq V$ referred to as {\em terminals} are $\lambda$-connected.
The well-known Steiner tree problem is to find a minimum-cost subgraph
that ($1$-)connects a given set $S$. Many of these problems are
special cases of the survivable network design problem (SNDP). In
SNDP, each pair of nodes $u,v \in V$ specifies a connectivity
requirement $r(u,v)$ and the goal is to find a minimum-cost subgraph
that has $r(u,v)$ disjoint paths for each pair $u,v$. Given the
intractability of these connectivity problems, there has been a large
amount of work on approximation algorithms. A number of elegant and
powerful techniques and results have been developed over the years
(see \cite{Hochbaum96,Vazirani01}). In particular, the primal-dual
method \cite{AgrawalKR95,GoemansW96} and iterated rounding \cite{Jain}
have led to some remarkable results including a $2$-approximation for
edge-connectivity SNDP \cite{Jain}.

An interesting class of problems, related to some of the connectivity
problems described above, is obtained by requiring that only $k$ of
the given terminals be connected. These problems are partly motivated
by applications in which one seeks to maximize profit given a upper
bound (budget) on the cost.  For example, a useful problem in vehicle
routing applications is to find a path that maximizes the number of
vertices in it subject to a budget $B$ on the length of the path. In
the exact optimization setting, the profit maximization problem is
equivalent to the problem of minimizing the cost/length of a path
subject to the constraint that at least $k$ vertices are included. Of
course the two versions need not be approximation equivalent,
nevertheless, understanding one is often fruitful or necessary to
understand the other. The most well-studied of these problems is the
$k$-MST problem; the goal here is to find a minimum-cost subgraph of
the given graph $G$ that contains at least $k$ vertices (or
terminals). This problem has attracted considerable attention in the
approximation algorithms literature and its study has led to several
new algorithmic ideas and applications
\cite{AwerbuchABV95,Garg96,Garg05,ChaudhuriGRT03,orienteering}.  We
note that the Steiner tree problem can be relatively easily reduced in
an approximation preserving fashion to the $k$-MST problem.  More
recently, Lau \etal \cite{LauNSS07} considered the natural
generalization of $k$-MST to higher connectivity. In particular they
defined the $(k,\lambda)$-subgraph problem to be the following: find a
minimum-cost subgraph of the given graph $G$ that contains at least
$k$ vertices and is $\lambda$-edge connected. We use the notation
\kec{\lambda} to refer to this problem. In \cite{LauNSS07} an
$O(\log^3 k)$ approximation was claimed for the \ke problem.  However,
the algorithm and proof in \cite{LauNSS07} are incorrect.  More
recently, and in independent work from ours, the authors of
\cite{LauNSS07} obtained a different algorithm for \ke that yields an
$O(\log n \log k)$ approximation. We give later a more detailed comparison
between their approach and ours. It is also shown in \cite{LauNSS07}
that a good approximation for \kec{\lambda} when $\lambda$ is large
would yield an improved algorithm for the $k$-densest subgraph problem
\cite{FeigeKP01}; in this problem one seeks a $k$-vertex subgraph of a
given graph $G$ that has the maximum number of edges. The $k$-densest
subgraph problem admits an $O(n^{\delta})$ approximation for some
fixed constant $\delta < 1/3$ \cite{FeigeKP01}, but has resisted
attempts at an improved approximation for a number of years now.

In this paper we consider the vertex-connectivity generalization of
the $k$-MST problem. We define the \kvc{\lambda} problem as follows:
Given an integer $k$ and a graph $G$ with edge costs, find the
minimum-cost $\lambda$-vertex-connected subgraph of $G$ that contains
at least $k$ vertices. We also consider the {\em terminal} version of
the problem where the subgraph has to contain $k$ terminals from a
given terminal set $S \subseteq V$. It can be easily shown that the
\kec{\lambda} problem reduces to the \kvc{\lambda} problem for any $k
\ge 1$. We also observe that the \kec{\lambda} problem with terminals
can be easily reduced, as follows, to the uniform problem where every
vertex is a terminal: For each terminal $v \in S$, create $n$ dummy
vertices $v_1, v_2, \ldots, v_n$ and attach $v_i$ to $v$ with
$\lambda$ parallel edges of zero cost. Now set $k'= kn$ in the new
graph. One can avoid using parallel edges by creating a clique on
$v_1, v_2, \ldots, v_n$ using zero-cost edges and connecting $\lambda$
of these vertices to $v$. Note, however, that this reduction only
works for edge-connectivity. We are not aware of a reduction that
reduces the \kvc{\lambda} problem with a given set of terminals to the
\kvc{\lambda} problem, even when $\lambda=2$. In this paper we
consider the \kv problem; our main result is the following.

\begin{theorem}
  \label{thm:kv}
  There is an $O(\log \ell \cdot \log k)$ approximation for the \kv
  problem where $\ell$ is the number of terminals.
\end{theorem}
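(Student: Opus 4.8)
\medskip
\noindent\textbf{Proof proposal.}
The plan is to prove Theorem~\ref{thm:kv} through two reductions, each responsible for one logarithmic factor. First, reduce \kv to its \emph{density} version \densV --- find a $2$-vertex-connected subgraph minimizing the ratio of cost to number of terminals it contains --- losing an $O(\log k)$ factor, via a greedy ``cover the terminals'' argument. Second, approximate \densV to within $O(\log \ell)$ by writing a natural LP relaxation, \densLP, of the density problem and rounding it. Composing the two guarantees yields the claimed $O(\log \ell \cdot \log k)$ bound (and since $\ell \le n$, the $O(\log n \log k)$ stated in the abstract).

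For the first reduction I would argue as follows. By standard scaling we may guess $\opt$ up to a factor of $2$, and --- crucially for controlling $2$-connectivity under merging --- we may also guess a pair of vertices $r_1, r_2$ lying in the optimal solution (only $O(n^2)$ choices, and the optimal solution has at least three vertices). We then run a greedy loop maintaining a $2$-vertex-connected subgraph $H$ that contains both $r_1$ and $r_2$: in each round we invoke the \densV\ approximation restricted to subgraphs containing $r_1$ and $r_2$, with density measured against only the not-yet-covered terminals, obtain a piece $H_i$, and set $H \leftarrow H \cup H_i$. The key structural fact is that the union of two $2$-vertex-connected graphs sharing at least two vertices is again $2$-vertex-connected (a short Menger/ear-decomposition argument), so $H$ stays $2$-vertex-connected throughout; and since the optimal solution itself contains $r_1, r_2$ together with all of its terminals, it certifies that the residual density available in round $i$ is at most $\opt$ divided by the number of yet-uncovered optimal terminals. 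A telescoping harmonic sum over the rounds then bounds the total cost by $O(\log k)$ times the \densV\ ratio times $\opt$. The one wrinkle is that the last piece may cover far more terminals than needed; I would control this with a trimming step --- showing a low-density $2$-vertex-connected subgraph contains a sub-subgraph that is still $2$-vertex-connected, still contains $r_1,r_2$, has boundedly many terminals, and density larger by only a constant --- or by simply capping the number of terminals the \densV\ subroutine is allowed to output.

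The technical heart, and the step I expect to be the main obstacle, is the $O(\log \ell)$-approximation for \densV. I would set up \densLP\ on a vertex-split (bidirected) version of $G$: edge variables $x_e$, a coverage variable $z_t \in [0,1]$ per terminal, the normalization $\sum_t z_t = 1$, and cut constraints asserting that $x$ supports, for every terminal $t$, a $z_t$-scaled pair of internally vertex-disjoint $r_1$--$t$ paths (i.e. $x(\delta(W)) \ge 2 z_t$ for sets $W$ separating $t$ from $r_1$, together with the analogous unit-flow constraints in the split graph that forbid a single-vertex cut); the objective is $\min \sum_e c_e x_e$, and its value is at most the optimal density. What remains is to round a fractional solution into one $2$-vertex-connected subgraph of density $O(\log \ell)$ times the LP value. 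Here one exploits that a $2$-vertex-connected graph is grown from the root pair by successively attaching ears, and rounds by an iterative region-growing / set-cover style scheme that in each phase buys a cheap $2$-connected gadget covering a constant fraction of the currently (fractionally) covered terminals --- so $O(\log \ell)$ phases suffice --- while tracking vertex- rather than edge-connectivity in the split graph throughout; alternatively, one first rounds the edge-connected relaxation (where iterated rounding is available) and then patches in the missing vertex-connectivity cheaply. Making this rounding respect vertex-connectivity \emph{and} the density objective simultaneously is where the real difficulty concentrates; the rest is bookkeeping around the two logarithmic losses, which multiply to give Theorem~\ref{thm:kv}.
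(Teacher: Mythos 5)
Your high-level architecture matches the paper's almost exactly: reduce to a rooted version by guessing two optimal vertices, run a greedy cover using a density subroutine (\densV), bound the augmentation cost by a harmonic/set-cover argument, handle the overshoot of the last piece by a pruning step, and obtain the \densV\ approximation from an LP. But the two steps you flag as ``wrinkles'' or ``bookkeeping'' are precisely where the paper's content lives, and your proposed ways around them have real gaps.

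First, the pruning step. You suggest either a trimming lemma --- that any low-density $2$-vertex-connected graph contains a $2$-vertex-connected subgraph with $\Theta(k)$ terminals whose density is worse by only a constant --- or ``capping the number of terminals the \densV\ subroutine outputs.'' Neither is available. The trimming lemma as stated is not known (and is stronger than what the paper proves); $2$-vertex-connectivity is not hereditary, and you cannot simply discard vertices or ears from a $2$-connected graph and control both connectivity and density. The paper's entire Section~4 is devoted to this: it builds clusters bottom-up, repeatedly finds low-density \emph{non-trivial} cycles (Theorem~\ref{thm:cycle}, itself nontrivial and requiring the earring/safe-segment machinery of Section~3) to merge $2$-connected clusters while preserving $2$-connectivity via Proposition~\ref{prop:shareEdge}, and only obtains a pruned piece with an $O(\log k)$ density loss (Theorem~\ref{thm:avekv}), which it then absorbs into the overall $O(\log\ell\cdot\log k)$ bound. ``Capping'' the \densV\ LP output is not a well-defined operation --- the LP has no knob that limits the number of terminals in the rounded subgraph while preserving the density guarantee. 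So your proposal is missing the central technical idea of the paper.

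Second, the $O(\log\ell)$-approximation for \densV. The paper's route is a bucketing-and-scaling step applied to a cycle-flow LP, which reduces the problem to a single instance of rooted $\{0,2\}$-vertex-connectivity SNDP, and then invokes the integrality gap of $2$ for that LP due to Fleischer, Jain, and Williamson~\cite{FleischerJW}. Your alternatives --- an iterative region-growing scheme directly maintaining vertex-connectivity in a split graph, or rounding an edge-connectivity relaxation and ``patching'' vertex-connectivity --- are not established to work for density objectives and in the second case do not generally preserve approximation guarantees (an edge-$2$-connected solution can be far from vertex-$2$-connectable cheaply). You would be re-proving, less cleanly, a result the literature already supplies. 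Importing the FJW bound is not mere bookkeeping; it is the ingredient that makes the $O(\log\ell)$ factor go through.

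In short: you have the right skeleton, but the proposal treats the two load-bearing steps (pruning via cycle-merging, and the FJW-based rounding for \densV) as afterthoughts, and the substitutes you sketch for them would not close the argument.
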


\begin{corollary}
  \label{cor:ke}
  There is an $O(\log \ell \cdot \log k)$ approximation for the \ke
  problem where $\ell$ is the number of terminals.
\end{corollary}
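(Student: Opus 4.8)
The plan is to follow the now-standard two-level strategy for ``$k$-coverage'' connectivity problems, proving Theorem~\ref{thm:kv} directly; Corollary~\ref{cor:ke} is then immediate, since \ke reduces to \kv (as noted in the introduction). First I would introduce the density relaxation, the \densV problem: given $G$ with edge costs and a terminal set $S$, find a $2$-vertex-connected subgraph $H$ minimizing the density $\weight(H)/|S\cap V(H)|$. The two ingredients are then (i) an $O(\log\ell)$-approximation for a suitably \emph{rooted} version of the \densV problem, obtained by rounding an LP relaxation \densLP described below, and (ii) a greedy reduction showing that an $\alpha$-approximation for the (rooted) density problem yields an $O(\alpha\log k)$-approximation for \kv. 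Chaining (i) and (ii) gives the theorem.

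For ingredient (ii) I would first deal with the one place where vertex-connectivity is genuinely harder than edge-connectivity: a union of two $2$-edge-connected subgraphs that share a vertex is again $2$-edge-connected, but a union of two $2$-vertex-connected subgraphs that share a \emph{single} vertex is not (that vertex becomes a cut vertex), so a naive greedy that keeps taking unions of low-density pieces fails. The remedy is to anchor all pieces at a common guessed pair: enumerate all $O(\ell^2)$ pairs $\{r_1,r_2\}$ of terminals (some pair lies in a fixed optimal solution $\opt$), and for each pair solve the rooted problem ``find a min-density $2$-connected subgraph containing both $r_1$ and $r_2$''. If every greedily chosen piece contains $\{r_1,r_2\}$, then their union is again $2$-vertex-connected --- one checks directly that deleting any vertex other than $r_1,r_2$ leaves each piece connected and still containing $\{r_1,r_2\}$, while deleting $r_1$ (resp.\ $r_2$) leaves each piece connected and containing $r_2$ (resp.\ $r_1$), so the pieces remain glued. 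With this in place the greedy is routine: repeatedly add an approximately-min-density rooted $2$-connected piece on the terminals not yet covered; since $\opt$ is itself a feasible such piece, after $c$ terminals have been covered the piece found has density at most $\alpha\cdot\opt/(k-c)$ per newly covered terminal, and a harmonic-sum accounting yields total cost $O(\alpha\log k)\cdot\opt$ (the last, possibly over-shooting, step is handled exactly as in the analogous $k$-MST and $k$-Steiner-tree arguments).

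The technical heart is ingredient (i). I would set up \densLP with a selection variable $z_t\in[0,1]$ for each terminal $t$ and edge variables $x_e\in[0,1]$, encoding $2$-vertex-connectivity to the root pair by the vertex-capacitated flow constraints that route $z_t$ units of flow from $t$ to $\{r_1,r_2\}$ after splitting every internal vertex into an in/out pair joined by a unit-capacity arc; the objective is $\sum_e\weight(e)\,x_e$ normalized by $\sum_t z_t$. The structural reason this is the right relaxation is the fan form of Menger's theorem: in a $2$-connected $H$ containing $r_1$ and $r_2$, every other vertex $t$ admits a $2$-fan to $\{r_1,r_2\}$ with one path ending at $r_1$ and the other at $r_2$, so an integral optimum is literally a union of such fans. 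To round a near-optimal fractional solution I would scale so that a constant fraction of $\sum_t z_t$ comes from terminals with $z_t=\Omega(1)$, use the fractional fans to cluster the selected terminals (by the scale of connection cost that dominates and by the region of $G$ through which they route), buy two disjoint paths to $\{r_1,r_2\}$ per cluster, and finally join the clusters through $\{r_1,r_2\}$ into one $2$-connected subgraph. The $O(\log\ell)$ factor enters in this clustering/covering step, analogously to the logarithmic integrality-gap phenomena for single-sink buy-at-bulk and for density versions of Steiner-type problems.

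I expect ingredient (i) --- extracting, from an LP whose edge support is heavily shared among the terminals, an integral $2$-connected subgraph that covers a constant fraction of the fractional terminal mass at comparable density --- to be the main obstacle and the source of the $\log\ell$ loss; the merging subtlety in (ii), though it is exactly where vertex connectivity departs from edge connectivity, is resolved cleanly by the root-pair trick, and the rest is bookkeeping of the kind familiar from $k$-MST--type reductions.
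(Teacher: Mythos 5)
Your reduction of the corollary to Theorem~\ref{thm:kv} is exactly what the paper does: \ke reduces to \kv (stated in the introduction), so the corollary is immediate given the theorem. Since you then spend the bulk of the proposal sketching a proof of Theorem~\ref{thm:kv} itself, that sketch is what needs scrutiny, and it contains a real gap.

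The gap is in the sentence ``the last, possibly over-shooting, step is handled exactly as in the analogous $k$-MST and $k$-Steiner-tree arguments.'' It is not. In the $k$-MST setting the final, too-large piece is a tree, and pruning a tree down to the right number of terminals while keeping it connected is trivial (remove leaves). Here the final piece is a $2$-vertex-connected subgraph $H$ that may contain far more than the $k'$ remaining terminals, and you must extract from $H$ a subgraph on roughly $k'$ terminals that is still $2$-connected to the root and has cost comparable to $\mathrm{density}(H)\cdot k'$. There is no analogue of ``remove leaves'': deleting vertices from a $2$-connected graph generally destroys $2$-connectivity. This pruning step is precisely the paper's main technical contribution (Theorem~\ref{thm:avekv}), and its proof occupies Sections~\ref{sec:cycles} and~\ref{sec:pruning}: one first proves that any $2$-connected graph contains a non-trivial cycle whose density is at most the graph's average density (Theorem~\ref{thm:cycle}, via the ear/earring arguments), and then runs a cluster-merging procedure ({\sc MergeClusters}) that repeatedly uses such cycles to grow $2$-connected clusters tier by tier, with a charging argument (penultimate vs.\ final edges) to keep the density to $O(\log k)\rho$ rather than $O(\log^2 k)\rho$. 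Without something of this sort your greedy accounting only bounds the augmentation steps, not the overshoot, and the claimed $O(\log\ell\cdot\log k)$ bound does not follow.

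Two smaller remarks. Your root-pair gluing observation (unions of $2$-connected subgraphs sharing two vertices are $2$-connected) is correct and is Proposition~\ref{prop:shareEdge} in the paper, though the paper instead reduces to a single-root version by adding a new vertex $r$ joined by zero-cost edges to the guessed pair. Your LP rounding sketch for the density problem is in the right spirit (bucket by $y_t$-scale, then round the dominant bucket), but the clean way to finish it, which the paper uses, is to invoke the integrality gap of~$2$ for the Fleischer--Jain--Williamson LP for $\{0,1,2\}$ vertex-connectivity SNDP, rather than an ad hoc clustering-and-fans argument; the $\log\ell$ comes only from the number of buckets.
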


One of the technical ingredients that we develop is the theorem below
which may be of independent interest. Given a graph $G$ with edge costs
and weights on terminals $S \subseteq V$, we define $density(H)$ for a
subgraph $H$ to be the ratio of the cost of edges in $H$ to the total
weight of terminals in $H$.

\begin{theorem}
  \label{thm:cycle}
  Let $G$ be an $2$-vertex-connected graph with edge costs and let $S
  \subseteq V$ be a set of terminals. Then, there is a simple cycle $C$
  containing at least $2$ terminals (a non-trivial cycle) such that
  the density of $C$ is at most the density of $G$. Moreover, such a
  cycle can be found in polynomial time.
\end{theorem}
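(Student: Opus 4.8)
The plan is to derive the theorem from a purely combinatorial statement about $2$-connected graphs. Say that a finite family $\mathcal{F}$ of non-trivial simple cycles of $G$, together with nonnegative multipliers $\{\lambda_C\}_{C\in\mathcal{F}}$, is a \emph{fractional non-trivial cycle cover} of $(G,S)$ if $\sum_{C\in\mathcal{F}:\,e\in C}\lambda_C\le 1$ for every edge $e$ and $\sum_{C\in\mathcal{F}:\,v\in C}\lambda_C\ge 1$ for every terminal $v\in S$. I claim that every $2$-connected $G$ with $|S|\ge 2$ admits such a cover. Granting this, the statement follows by averaging: multiplying the edge inequalities by the edge costs and adding gives $\sum_C\lambda_C\,\mathrm{cost}(C)\le\mathrm{cost}(G)$, multiplying the terminal inequalities by the terminal weights and adding gives $\sum_C\lambda_C\,\mathrm{weight}(C)\ge\mathrm{weight}(G)$, so $\sum_C\lambda_C\bigl(\mathrm{cost}(C)-d\cdot\mathrm{weight}(C)\bigr)\le 0$ where $d$ is the density of $G$; hence some cycle $C$ with $\lambda_C>0$ has density at most $d$, and it is non-trivial by construction of $\mathcal{F}$.

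To build the cover I would induct on $|V(G)|+|E(G)|$, after two density-free reductions that preserve $2$-connectivity and $S$: (a) if some edge $e$ has $G-e$ still $2$-connected, delete it, since any cover of $G-e$ is a cover of $G$; (b) if some non-terminal vertex $x$ has degree $2$, suppress it (merge its two incident edges into one), since a cover of the suppressed graph lifts to $G$ by re-expanding $x$ inside every cycle that uses the merged edge. After these reductions either $G$ is a single cycle---and then $\lambda_G=1$ works, as $G$ itself contains $|S|\ge 2$ terminals---or $G$ is minimally $2$-connected, is not a cycle, and has all its degree-$2$ vertices in $S$. In that case $G$ cannot be $3$-connected, so it has a $2$-vertex-cut $\{u,v\}$; one checks $uv\notin E(G)$, and writing $G=G_A\cup G_B$ with $V(G_A)\cap V(G_B)=\{u,v\}$ and both sides nonempty, the graphs $G_A^{+}=G_A+uv$ and $G_B^{+}=G_B+uv$ are $2$-connected and strictly smaller than $G$. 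I would recurse on $G_A^{+}$ and on $G_B^{+}$: a cycle of a returned cover that avoids the new edge $uv$ is already a cycle of $G$ and is kept with its multiplier; the cycles through $uv$ on the $G_A^{+}$ side, with $uv$ deleted, form a fractional collection of internally-$A$ paths between $u$ and $v$ of edge-congestion at most $1$, and symmetrically on the $G_B^{+}$ side. One then closes up each $uv$-cycle of one side against the $u$--$v$ path collection of the other side---these are internally vertex-disjoint because $G_A$ and $G_B$ meet only in $\{u,v\}$---obtaining genuine simple cycles of $G$, and distributes the multipliers in a transportation-style (proportional) manner so that the edge-congestion of $G$ stays at most $1$ while the coverage of every terminal remains at least $1$.

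The combination step is where I expect the real difficulty, together with its degenerate cases. The quantitative points are: one must bound how much of the capacity of the edge $uv$ each recursive cover consumes and show the two path collections are large enough to serve one another's demand---this coverage is forced by the requirements at $u$ and $v$, which is the reason to carry $u$ and $v$ as auxiliary terminals inside the recursive instances; the composite cycles must be non-trivial in $G$, which needs care, since with $u,v$ present as auxiliary terminals a composite cycle could a priori carry too few terminals of $S$, so one must arrange that the $u$--$v$ subpaths being spliced pass through genuine terminals of the two sides; and when one side contains fewer than two terminals of $S$ the clean recursion breaks, and one instead routes that side's contribution through a single fixed $u$--$v$ path and verifies the coverage bounds directly. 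Finally, for the ``polynomial time'' assertion one need not exhibit the cover at all: a minimum-density non-trivial cycle can be found in polynomial time by standard techniques---binary search on the target ratio combined with a negative-cycle / min-cost-flow computation, enforcing the ``at least two terminals'' requirement by fixing a pair of terminals the cycle must contain---and by the claim above its density is at most that of $G$.
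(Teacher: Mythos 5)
Your central structural claim---that every $2$-connected $G$ with $|S|\ge 2$ admits a fractional non-trivial cycle cover---is, by LP duality, \emph{exactly equivalent} to the statement you are trying to prove (quantified over all nonnegative edge-cost and terminal-weight assignments). Indeed, Farkas' lemma says the cover fails to exist precisely when some choice of $y_e\ge 0$ and $z_v\ge 0$ makes every non-trivial cycle have $\sum_{e\in C} y_e \ge \sum_{v\in C} z_v$ while $\sum_e y_e < \sum_v z_v$, i.e.\ every non-trivial cycle has density above that of $G$. So the averaging argument you give is a clean reformulation, but it does not shift any of the burden: proving the cover exists requires the same graph-theoretic work as proving the theorem. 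Your inductive attempt to build the cover through a $2$-cut $\{u,v\}$ is where that work would have to happen, and you flag yourself that the combination step is unresolved---one must balance the total multiplier on the artificial edge $uv$ between the two recursive covers, guarantee that a glued $u$--$v$ path from $G_A$ and a $u$--$v$ path from $G_B$ together see at least two genuine terminals of $S$ (carrying $u,v$ as auxiliary terminals does not ensure this), and handle the side with fewer than two real terminals. None of these are small: as written the recursion does not terminate with a valid cover, so the existence part is not established.

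The polynomial-time part has a more serious problem. You propose to \emph{exactly} solve the minimum-density non-trivial cycle problem by fixing a terminal pair $\{s,t\}$, binary-searching on the ratio $d$, and running a min-cost two-vertex-disjoint-paths computation after subtracting $d\cdot w_v$ at each vertex. The paper explicitly states that the authors suspect minimum-density non-trivial cycle is NP-hard, and your ``standard technique'' does not evade this: after the $-d\,w_v$ reweighting the split graph has negative arcs, and because $G$ itself has density $d$ the reweighted graph contains negative-cost (unit-capacity) cycles; a min-cost flow of value $2$ from $s$ to $t$ will happily augment along such cycles that are disjoint from both $s$--$t$ paths, so its value is \emph{not} the minimum reduced cost of a single simple cycle through $s$ and $t$. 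Without a way to suppress these extra cycles, the binary-search oracle you rely on is not provided by a routine flow computation; and unconstrained min-cost two disjoint paths with arbitrary signs is not known to be polynomial (negating costs shows it captures longest-cycle-type questions). The paper sidesteps this entirely: its strongly polynomial algorithm computes an \emph{unconstrained} min-density cycle $C$ (a genuine min-ratio cycle computation, which is poly-time); if $C$ has two terminals it stops, and otherwise it invokes a purely structural ``safe segment'' theorem (Theorem~3.7) to delete part of $C$, leaving a strictly smaller $2$-connected graph on the same terminal set, and recurses. You would need either to complete your inductive cover construction (which is nontrivial and, even if done, gives only existence, not an efficient algorithm) or to replace the min-density-non-trivial-cycle subroutine with something that is actually polynomial, e.g.\ a safe-deletion argument in the spirit of the paper's.
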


Using the above theorem and an LP approach we obtain the following. 
\begin{corollary}
  \label{cor:cycle}
  Given a graph $G(V,E)$ with edge costs and $\ell$ terminals
  $S\subseteq V$, there is an $O(\log \ell)$ approximation for the
  problem of finding a minimum-density non-trivial cycle.
\end{corollary}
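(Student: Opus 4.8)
The plan is to reduce to finding a cheap $2$-vertex-connected subgraph that contains several terminals, and then to distill a non-trivial cycle out of it using Theorem~\ref{thm:cycle}. Fix an optimal non-trivial cycle $C^*$; let $n^* = |S \cap C^*| \ge 2$ and $\rho^* = c(C^*)/n^*$ be its density (the corollary is stated with unweighted terminals, so density is cost over the \emph{number} of terminals on the cycle). First I would guess a terminal $s \in S \cap C^*$ — there are only $\ell$ choices, so we run the algorithm for each and return the best cycle found. With the root $s$ fixed, I would solve the natural flow LP: variables $x_e \ge 0$ for each edge $e$ and $y_v \in [0,1]$ for each $v \in S \setminus \{s\}$; objective $\min \sum_e c_e x_e$; normalization $\sum_{v \in S \setminus \{s\}} y_v \ge 1$; and, for each $v$, a flow of value $2y_v$ from $v$ to $s$ that respects edge capacities $x_e$ and lets at most $y_v$ units pass through any intermediate vertex (so that an integral solution provides two internally vertex-disjoint $v$--$s$ paths). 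Scaling the indicator vector of $C^*$ down by $1/(n^*-1)$ — that is, putting $x_e = y_v = 1/(n^*-1)$ on $C^*$ and routing each required flow along the two arcs of $C^*$ — is feasible and has cost $c(C^*)/(n^*-1) \le 2\rho^*$, so the LP optimum is at most $2\rho^*$.

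The rounding, which is where the $O(\log \ell)$ factor comes from, is a bucketing of the $y$-values. Terminals with $y_v < 1/(2\ell)$ together contribute less than $1/2$ to $\sum_v y_v \ge 1$; partitioning the rest into the $O(\log \ell)$ buckets $B_i = \{\, v : y_v \in [2^{-i-1}, 2^{-i}) \,\}$, some bucket $B_{i^*}$ has $\sum_{v \in B_{i^*}} y_v = \Omega(1/\log \ell)$, and hence $|B_{i^*}| = \Omega(2^{i^*}/\log \ell)$. Now discard the $y_v$ for $v \notin B_{i^*}$ and multiply the whole remaining fractional solution by $2^{i^*+1}$; after this the surviving $y_v$ are all $\ge 1$, and (modulo a routine cleanup) the result is a feasible fractional solution, of cost $O(2^{i^*}\rho^*)$, to the standard LP relaxation of the problem of finding a minimum-cost subgraph in which every vertex of $B_{i^*}$ has two vertex-disjoint paths to $s$ — equivalently, a minimum-cost $2$-vertex-connected subgraph spanning $\{s\} \cup B_{i^*}$. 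I would then invoke a known constant-factor approximation for that problem (it is vertex-connectivity survivable network design with every requirement in $\{0,2\}$) to obtain an integral $2$-vertex-connected subgraph $H \supseteq \{s\} \cup B_{i^*}$ of cost $O(2^{i^*}\rho^*)$. Since $H$ contains $1 + |B_{i^*}| \ge 2$ terminals, the density of $H$ is $O(2^{i^*}\rho^*) \big/ \bigl(1 + \Omega(2^{i^*}/\log\ell)\bigr) = O(\rho^* \log \ell)$ in all cases (when $2^{i^*} \ge \log\ell$ the denominator dominates, otherwise the numerator is already $O(\rho^*\log\ell)$ and the denominator is $\ge 1$).

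To finish, I would apply Theorem~\ref{thm:cycle} to $H$: it is $2$-vertex-connected and contains at least two terminals, so in polynomial time we obtain a non-trivial cycle $C \subseteq H$ whose density is at most that of $H$, i.e.\ $O(\rho^* \log \ell)$; taking the best such $C$ over the $\ell$ guesses of $s$ yields an $O(\log \ell)$-approximation for the minimum-density non-trivial cycle.

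I expect the main obstacle to be the LP formulation and its rounding: vertex-cut (setpair) constraints are fussier than edge-cut ones, one has to verify that ``two vertex-disjoint paths from every terminal to the root $s$'' really forces the rounded subgraph to be globally $2$-vertex-connected (it does — deleting a single vertex cannot disconnect any terminal from $s$), and one needs the rounding to have a constant integrality gap together with the routine argument that turns the scaled fractional vector into a genuine feasible point of the relaxation being rounded (edge capacities may exceed $1$ after scaling and must be handled). A secondary point is that the $O(\log \ell)$ loss in the bucketing genuinely uses the uniform terminal weights of the corollary; with arbitrary weights the same scheme would lose $O\!\bigl(\log(\ell \cdot w_{\max}/w_{\min})\bigr)$, so a different idea would be needed there.
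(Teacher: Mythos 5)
Your proof is correct and takes essentially the same route as the paper: a cycle/flow LP, the bucketing-and-scaling trick, rounding via the Fleischer--Jain--Williamson integrality gap of $2$, and then Theorem~\ref{thm:cycle} to extract a non-trivial cycle from the resulting $2$-connected subgraph. The paper packages this as Lemma~\ref{lem:densV} (the $O(\log \ell)$-approximation for \densV) combined with Theorem~\ref{thm:equivalence}, with your explicit guess of the root terminal $s$ playing the role of the reduction from unrooted to rooted \densV that the paper leaves implicit.
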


Note that Theorem~\ref{thm:cycle} and Corollary~\ref{cor:cycle} are of
interest because we seek a cycle with at least {\em two} terminals.  A
minimum-density cycle containing only one terminal can be found by
using the well-known min-mean cycle algorithm in directed graphs
\cite{networkflows_book}.  We remark, however, that although we
suspect that the problem of finding a minimum-density non-trivial
cycle is NP-hard, we currently do not have a proof.
Theorem~\ref{thm:cycle} shows that the problem is equivalent to the
\densV problem, defined in the next section.

\smallskip
\noindent {\bf Remark:} The reader may wonder whether \ke or \kv admit
a constant factor approximation, since the $k$-MST problem admits one.
We note that the main technical tool which underlies $O(1)$
approximations for $k$-MST problem \cite{BlumRV96,Garg96,ChudakRW} is
a special property that holds for a LP relaxation of the
prize-collection Steiner tree problem \cite{GoemansW96} which is a
Lagrangian relaxation of the Steiner tree problem. Such a property is
not known to hold for generalizations of $k$-MST including \ke and \kv
and the $k$-Steiner forest problem \cite{HajiaghayiJ}. Thus, one is
forced to rely on alternative and problem-specific techniques.

\subsection{Overview of Technical Ideas}
We consider the rooted version of \kv: the goal is to find a min-cost
subgraph that $2$-connects at least $k$ terminals to a specified root
vertex $r$. It is relatively straightforward to reduce \kv to its
rooted version (see section~\ref{sec:k2vc} for details.)  We draw
inspiration from algorithmic ideas that led to poly-logarithmic
approximations for the $k$-MST problem. 

To describe our approach to the rooted \kv problem, we define a
closely related problem. For a subgraph $H$ that contains $r$, let
$k(H)$ be the number of terminals that are $2$-connected to $r$ in
$H$. Then the {\em density} of $H$ is simply the ratio of the cost of
$H$ to $k(H)$. The \densV problem is to find a 2-connected subgraph of
minimum density. An $O(\log \ell)$ approximation for the \densV
problem (where $\ell$ is the number of terminals) can be derived in a
some what standard way by using a bucketing and scaling trick on a
linear programming relaxation for the problem. We exploit the known
bound of $2$ on the integrality gap of a natural LP for the SNDP
problem with vertex connectivity requirements in $\{0,1,2\}$
\cite{FleischerJW}.  The bucketing and scaling trick has seen several
uses in the past and has recently been highlighted in several
applications \cite{ChekuriHKS06,ChekuriHKS07,ChekuriEGS08}.

Our algorithm for \kv uses a greedy approach at the high level. We
start with an empty subgraph $G'$ and use the approximation algorithm
for \densV in an iterative fashion to greedily add terminals to $G'$
until at least $k' \ge k$ terminals are in $G'$. This approach would
yield an $O(\log \ell \log k)$ approximation if $k' = O(k)$. However,
the last iteration of the \densV algorithm may add many more terminals
than desired with the result that $k' \gg k$. In this case we cannot
bound the quality of the solution obtained by the algorithm.  To
overcome this problem, one can try to {\em prune} the subgraph $H$
added in the last iteration to only have the desired number of
terminals. For the $k$-MST problem, $H$ is a tree and pruning is
quite easy. We remark that this yields a rather straightforward
$O(\log n \log k)$ approximation for $k$-MST and could have been
discovered much before a more clever analysis given in
\cite{AwerbuchABV95}.

One of our technical contributions is to give a pruning step for the
\kv problem. To accomplish this, we use two algorithmic ideas. The
first is encapsulated in the cycle finding algorithm of
Theorem~\ref{thm:cycle}.  Second, we use this cycle finding algorithm
to repeatedly merge subgraphs until we get the desired number of
terminals in one subgraph. This latter step requires care. The cycle
merging scheme is inspired by a similar approach from the work of Lau
\etal \cite{LauNSS07} on the \ke problem and in \cite{ChekuriKP08} on
the directed orienteering problem. These ideas yield an $O(\log \ell
\cdot \log^2 k)$ approximation. We give a slightly modified
cycle-merging algorithm with a more sophisticated and non-trivial
analysis to obtain an improved $O(\log \ell \cdot \log k)$
approximation.

Some remarks are in order to compare our work to that of
\cite{LauNSS07} on the \ke problem. The combinatorial algorithm in
\cite{LauNSS07} is based on finding a low-density
cycle or a related structure called a bi-cycle. The algorithm in
\cite{LauNSS07} to find such a structure is incorrect.  Further, the
cycles are contracted along the way which limits the approach to the
\ke problem (contracting a cycle in $2$-node-connected graph may make
the resulting graph not $2$-node-connected).  In our algorithm we do
not contract cycles and instead introduce dummy terminals with weights
to capture the number of terminals in an already formed component.
This requires us to now address the minimum-density non-trivial simple
cycle problem which we do via Theorem~\ref{thm:cycle} and
Corollary~\ref{cor:cycle}. In independent work, Lau \etal
\cite{LauNSS07b} obtain a new and correct $O(\log n \log
k)$-approximation for \ke.  They also follow the same approach that we
do in using the LP for finding dense subgraphs followed by the pruning
step.  However, in the pruning step they use a completely different
approach; they use the sophisticated idea of no-where zero $6$-flows
\cite{Seymour}. Although the use of this idea is elegant, the approach
works only for the \ke problem, while our approach is less complex
and leads to an algorithm for the more general \kv problem.

\section{The Algorithm for the \kv Problem}\label{sec:k2vc}

We work with graphs in which some vertices are designated as
\emph{terminals}.  Given a graph $G$ with edge costs and terminal
weights, we define the \emph{density} of a subgraph $H$ to be sum of
the costs of edges in $H$ divided by the sum of the weights of
terminals in $H$.  Henceforth, we use $2$-connected graph to mean a
$2$-vertex-connected graph.

The goal of the \kv problem is to find a minimum-cost 2-connected
subgraph on at least $k$ terminals.\footnote{In fact, our algorithm
  solves the harder problem in which terminals have weights, and the
  goal is to find a minimum-cost 2-connected subgraph in which the sum
  of terminal weights is at least $k$. For simplicity of exposition,
  however, we stick to the more restricted version.}  Recall that in
the rooted \kv problem, the goal is to find a min-cost subgraph on at
least $k$ terminals in which every terminal is 2-connected to the
specified root $r$. The (unrooted) \kv problem can be reduced to the
rooted version by \emph{guessing} 2 vertices $u,v$ that are in an
optimal solution, creating a new root vertex $r$, and connecting it
with 0-cost edges to $u$ and $v$. It is not hard to show that any
solution to the rooted problem in the modified graph can be converted
to a solution to the unrooted problem by adding 2 minimum-cost
vertex-disjoint paths between $u$ and $v$.  (Since $u$ and $v$ are in
the optimal solution, the cost of these added paths cannot be more
than $\opt$.) We omit further details from this extended abstract.

In the \densV problem, the goal is to find a subgraph $H$ of minimum
density in which all terminals of $H$ are 2-connected to the root. The
following lemma is proved in Section~\ref{subsec:LP} below.  It relies
on a $2$-approximation, via a natural LP, for the min-cost
$2$-connectivity problem due to Fleischer, Jain and Williamson
\cite{FleischerJW}, and some standard techniques.

\begin{lemma}\label{lem:densV}
  There is an $O(\log \ell)$-approximation algorithm for the \densV
  problem, where $\ell$ is the number of terminals in the given
  instance.
\end{lemma}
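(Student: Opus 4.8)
The plan is a standard LP bucketing-and-scaling argument, anchored by the integrality-gap-$2$ LP for $\{0,1,2\}$-vertex-connectivity SNDP of Fleischer, Jain and Williamson \cite{FleischerJW}. I would introduce a variable $x_e\in[0,1]$ for each edge and a variable $z_v\in[0,1]$ for each terminal $v$, where $z_v$ fractionally indicates that $v$ is $2$-connected to the root $r$, and consider the LP that minimizes $\sum_e c_e x_e$ subject to $\sum_v z_v\ge 1$ and, for every terminal $v$, the FJW-type cut constraints asserting that $x$ fractionally $2$-vertex-connects $v$ to $r$ scaled by $z_v$ (every edge cut separating $v$ from $r$, and every such cut after deleting a vertex, carries at least $2z_v$ units of $x$, with the usual deficiency terms for vertex deletions). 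If $H^*$ is an optimal \densV solution with cost $C^*$ and $t^*$ terminals, then $z_v=x_e=1/t^*$ on $H^*$ (and $0$ elsewhere) is feasible --- $2$-vertex-connectivity of $H^*$ makes every relevant cut carry at least $2/t^*$ --- so the LP optimum is at most $C^*/t^*$, the density of $H^*$. The LP is polynomial-time solvable (min-cut separation, or an equivalent compact formulation).

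Given an optimal LP solution $(x^*,z^*)$, I would discard every terminal with $z^*_v<1/(2\ell)$ (the survivors still satisfy $\sum_v z^*_v\ge 1/2$), partition the survivors into the $O(\log\ell)$ dyadic buckets $B_j=\{v:2^{-j}<z^*_v\le 2^{-j+1}\}$, and pick the bucket $B_{j^*}$ with the largest total $z^*$-value, so that $\sum_{v\in B_{j^*}}z^*_v=\Omega(1/\log\ell)$ and hence $|B_{j^*}|=\Omega(2^{j^*}/\log\ell)$. Scaling by $2^{j^*}$ makes $2^{j^*}x^*$ carry more than $2$ units across every cut relevant to a terminal of $B_{j^*}$, so (modulo the capping issue below) it is a feasible fractional solution of cost $2^{j^*}\cdot\mathrm{density}(H^*)$ to the integral instance ``$2$-vertex-connect every terminal of $B_{j^*}$ to $r$''. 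Rounding via the $2$-approximation of \cite{FleischerJW} yields an integral subgraph $H$ that $2$-connects all of $B_{j^*}$ to $r$ with cost at most $2\cdot 2^{j^*}\cdot\mathrm{density}(H^*)$; since $H$ contains at least $|B_{j^*}|=\Omega(2^{j^*}/\log\ell)$ terminals $2$-connected to $r$, its density is $O(\log\ell)$ times the density of $H^*$, as required. (One may simply try all $O(\log\ell)$ buckets and output the best subgraph found.)

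The step that needs genuine care is passing from $2^{j^*}x^*$ to a bona fide feasible point of the integrality-gap-$2$ LP: scaling can push coordinates above $1$, and for connectivity requirement $2$ one cannot merely cap at $1$, since a single cut edge carrying almost all of the scaled flow collapses to value $1$ after capping. I would handle this by splitting the edges into ``heavy'' ones (those with $2^{j^*}x^*_e\ge 1$), which I buy outright at cost at most $2^{j^*}\sum_e c_e x^*_e$ (each contributes a full unit of connectivity while its $x^*_e$ already pays at least $2^{-j^*}$), and ``light'' ones, which cap harmlessly; the crux is then to verify that $2^{j^*}x^*$ restricted to the light edges still fractionally meets the \emph{residual} connectivity requirement of each $v\in B_{j^*}$ across every (vertex) cut once the heavy edges are installed. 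Carrying out this residual bookkeeping for vertex connectivity is, I expect, the main obstacle; the alternative is to check that the FJW guarantee can be applied directly to the uncapped fractional solution, using that a minimal $2$-vertex-connected subgraph is simple. The remaining pieces --- the LP-optimum bound, the dyadic bucketing, and the final density arithmetic --- are routine.
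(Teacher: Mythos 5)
Your proposal follows the same overall strategy as the paper: write a density LP, bucket terminals dyadically by their fractional coverage, scale the winning bucket up to a unit-requirement instance, and round via the factor-$2$ integrality gap of the Fleischer--Jain--Williamson LP for $\{0,1,2\}$ vertex connectivity. The one substantive structural difference is that you formulate the density LP with setpair cut constraints, while the paper uses a flow formulation: for each terminal $t$ there is a variable $f_C$ for each simple cycle $C$ through $t$ and $r$, with $\sum_{C \ni t} f_C \ge y_t$ and per-terminal capacity constraints $\sum_{C \in \mathcal{C}_t : e \in C} f_C \le x_e$. These are equivalent in value by LP duality, so your formulation is fine; the bucket-selection and density arithmetic match the paper essentially line for line.

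The point where your write-up has a genuine gap is exactly the one you flag: after scaling by $2^{j^*}$ some coordinates exceed $1$, and capping at $1$ is not automatically safe when the requirement is $2$. Your heavy/light split is not completed --- after installing the heavy edges you would still need to certify that the capped light part meets the \emph{residual} biset requirements, and as you acknowledge this residual bookkeeping for vertex connectivity is nontrivial and is left open. The paper's flow LP quietly sidesteps this difficulty, and this is the cleaner route. Because each cycle $C$ is simple, the per-terminal edge mass $g_e := \sum_{C \in \mathcal{C}_t : e \in C} f_C$ satisfies $g_e \le F_t := \sum_{C \in \mathcal{C}_t} f_C$, while every cycle crosses any $t$--$r$ edge cut $\delta$ at least twice, giving $\sum_{e \in \delta} g_e \ge 2F_t$. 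Since $x_e \ge g_e$, a short case split on $|\{e \in \delta : 2^{i+1} g_e \ge 1\}|$ (namely $\ge 2$, $=1$, or $=0$) shows $\sum_{e \in \delta} \min(2^{i+1} x_e, 1) \ge 2$ whenever $2^{i+1} F_t > 1$, and the analogous bound $\ge 1$ holds for cuts obtained after deleting one internal vertex because at least one of the two paths in each cycle avoids that vertex. Thus the capped, scaled vector is already feasible for the FJW LP. If you prefer to keep the cut formulation, the fix is to pass through the flow decomposition guaranteed by duality and run this same argument, rather than attempting the heavy/light residual accounting.
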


Let $\opt$ be the cost of an optimal solution to the \kv problem. We
assume knowledge of $\opt$; this can be dispensed with using standard
methods.  We pre-process the graph by deleting any terminal that does
not have 2 vertex-disjoint paths to the root $r$ of total cost at most
$\opt$. The high-level description of the algorithm for the 
rooted \kv problem is given below.

\begin{algo}
  $k' \assign k$, $\quad G'$ is the empty graph. \\
  While ($k' > 0$): \+ \\
  Use the approximation algorithm for \densV to find a subgraph $H$ in $G$.\\
  If ($k(H) \le k'$): \+ \\
  $G' \assign G' \cup H$, $\quad k' \assign k' - k(H)$ \\
  Mark all terminals in $H$ as non-terminals. \- \\
  Else: \+ \\
  {\em Prune} $H$ to obtain $H'$ that contains $k'$ terminals. \\
  $G' = G' \cup H'$, $\quad k' \assign 0$ \- \- \\
  Output $G'$
\end{algo}

At the beginning of any iteration of the while loop, the graph contains a
solution to the \densV problem of density at most $\frac{\opt}{k'}$.
Therefore, the graph $H$ returned always has density at most $O(\log \ell)
\frac{\opt}{k'}$. If $k(H) \le k'$, we add $H$ to $G'$ and decrement $k'$;
we refer to this as the \emph{augmentation} step. Otherwise, we have a
graph $H$ of good density, but with too many terminals. In this case, we
prune $H$ to find a graph with the required number of terminals; this is
the \emph{pruning step}. A simple set-cover type argument shows the
following lemma:

\begin{lemma}\label{lem:greedy}
  If, at every augmentation step, we add a graph of density at most
  $O(\log \ell) \frac{\opt}{k'}$ (where $k'$ is the number of
  additional terminals that must be selected), the total cost of all
  the augmentation steps is at most $O(\log \ell \cdot \log k) \opt$.
\end{lemma}

Therefore, we now only have to bound the cost of the graph $H'$ added
in the pruning step; we prove the following theorem in
Section~\ref{sec:pruning}.

\begin{theorem} \label{thm:avekv} Let $\langle G,k \rangle$ be an
  instance of the rooted \kv problem with root $r$, such that every
  vertex of $G$ has $2$ vertex-disjoint paths to $r$ of total cost at
  most $L$, and such that $\dens{G} \le \rho$. There is a
  polynomial-time algorithm to find a solution to this instance of
  cost at most $O(\log k)\rho k + 2L$.
\end{theorem}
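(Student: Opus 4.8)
The plan is to repeatedly apply the cycle-finding algorithm of Theorem~\ref{thm:cycle} to peel off low-density non-trivial cycles, and then merge the resulting cycles together via two-disjoint-path connections to the root until one component accumulates the desired $\Theta(k)$ terminals. Since $\dens{G}\le\rho$, Theorem~\ref{thm:cycle} guarantees a simple cycle $C_1$ with at least two terminals and $\dens{C_1}\le\rho$; we set these terminals aside (mark them as non-terminals, or rather contract the cycle into a single super-terminal whose weight records how many original terminals it represents), and repeat on what remains, always maintaining density at most $\rho$ in the working graph. This produces a collection of vertex-disjoint (up to the super-terminal bookkeeping) cycles $C_1, C_2, \ldots$, each of density $\le\rho$, which we will greedily combine. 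The crucial point is that because we do not contract cycles in $G$ itself (this would destroy 2-vertex-connectivity, as the introduction emphasizes), we instead carry weighted dummy terminals, and re-invoke Theorem~\ref{thm:cycle} on the modified instance — this is exactly the role of the minimum-density \emph{non-trivial} cycle guarantee.

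First I would set up the merging process as a binary-tree-like agglomeration: given two 2-connected components $B_1, B_2$ already built, each containing terminals 2-connected to $r$, I can join them by adding two vertex-disjoint paths from (a vertex of) $B_1 \cup \{r\}$ into $B_2$; by the preprocessing hypothesis every vertex has two disjoint paths to $r$ of cost $\le L$, so such a merge costs $O(L)$, and the merged component is again 2-connected to $r$ with terminal count the sum of the two. Doing this naively over all the cycles would cost $O(L)$ per cycle and there could be up to $k$ cycles, giving $O(Lk)$ — far too much. The fix, inspired by the cycle-merging scheme of Lau \etal \cite{LauNSS07} and \cite{ChekuriKP08}, is to merge in $O(\log k)$ \emph{rounds}, where in each round we pair up current components so that each merge at least doubles (or increases by a constant factor) the terminal count of the surviving component we care about; equivalently, we always fold small components into one growing component and charge each fold to the terminals it newly absorbs. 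Over $O(\log k)$ rounds the cycle-density cost telescopes to $O(\log k)\rho k$ (each terminal pays its cycle's density $\rho$ at most $O(\log k)$ times), and the number of expensive $O(L)$ path-additions is kept to $O(1)$ per round, i.e.\ $O(\log k)$ total — but we must be more careful to get the additive $2L$ rather than $O(L\log k)$.

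To squeeze the path cost down to the stated $+2L$, I would arrange that all merges route through a single pair of vertex-disjoint $r$-paths: once the first cycle $C_1$ is connected to $r$ using two disjoint paths of total cost $\le 2L$ (wait — here I should observe that the root $r$ already lies in $G$ and the two disjoint paths of cost $\le L$ referenced in the hypothesis are what we use, so this opening connection costs at most $2L$, or by reusing we can account a single $2L$ charge), every subsequent cycle is attached to the already-built 2-connected blob $B$ not by fresh $r$-paths but by two internally-disjoint paths into $B$, whose cost we fold into the density accounting by treating the attachment edges as part of an enlarged cycle through $B$. The main obstacle, and the step requiring the "more sophisticated and non-trivial analysis" the introduction promises, is precisely this bookkeeping: showing that the attachment paths for the $i$-th merge can be absorbed into a non-trivial cycle of density $\le\rho$ on the current weighted instance, so that Theorem~\ref{thm:cycle} applies to the \emph{merged} structure and not just to disjoint pieces. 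Concretely I expect to invoke Theorem~\ref{thm:cycle} on the graph where $B$ has been replaced by a single terminal of weight $k(B)$, extract a cycle $C$ through that super-terminal of density $\le\rho$, "uncontract" it into a genuine 2-connected augmentation of $B$ in $G$, and verify the terminal count grows geometrically so that $O(\log k)$ iterations suffice while the total density-charged cost is $\sum_i \rho\cdot(\text{new terminals}_i)\cdot O(1) = O(\log k)\rho k$. Handling the uncontraction so that 2-vertex-connectivity to $r$ is preserved at each step — and checking the final component has at least $k$ terminals (truncating a last cycle if it overshoots, which at most doubles its contribution) — is where the care lies; everything else is the geometric-series estimate already sketched.
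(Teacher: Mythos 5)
Your high-level instinct — peel off low-density non-trivial cycles via Theorem~\ref{thm:cycle}, carry weighted super-terminals instead of contracting, and merge over $O(\log k)$ rounds — is the right family of ideas, and the first two paragraphs are broadly consistent with the paper's {\sc MergeClusters} algorithm. But the accounting you sketch has two genuine gaps, and the one piece you flag as ``where the care lies'' is precisely the piece the paper had to redesign.

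First, you assume that each round produces cycles of density at most $\rho$. This fails after the first round: once terminals are absorbed, the remaining terminal weight shrinks but the edge cost of $G$ does not, so the density of the working instance rises. The paper cannot invoke Theorem~\ref{thm:cycle} to get cycles of density $\rho$; it has to settle for density $\alpha = 2\lceil\log k\rceil\rho$, and it ensures cycles of that density exist only while a constant fraction of the terminal weight remains in the current tier (Lemma~\ref{lem:fewLeftBehind}). Your telescoping estimate ``each terminal pays its cycle's density $\rho$ at most $O(\log k)$ times'' therefore gives $O(\log k)\cdot\alpha\cdot k = O(\log^2 k)\rho k$, not the claimed $O(\log k)\rho k$. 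That is exactly the paper's \emph{weaker} bound (via Lemma~\ref{lem:tierCost}). The extra $\log k$ factor is removed not by better per-cycle charging, but by a global argument: small clusters merged in a given tier are pairwise edge-disjoint (so the sum of ``penultimate'' costs over all large clusters is at most $\lceil\log k\rceil\cdot cost(G)$), plus an averaging argument to conclude that at least one large cluster has density $O(\log k)\rho$ (Lemmas~\ref{lem:final}, \ref{lem:penultimate}, Theorem~\ref{thm:goodLargeCluster}). Your proposal has no analogue of this edge-disjointness/averaging step.

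Second, the single-blob agglomeration you propose cannot be driven by Theorem~\ref{thm:cycle}. That theorem guarantees the existence of \emph{some} low-density non-trivial cycle, but gives you no way to force that cycle to pass through the super-terminal representing your current blob $B$. If the returned cycle avoids $B$, your blob does not grow and the geometric progress stalls. The paper avoids this by running all clusters forward in parallel: every terminal belongs to some cluster, every small cluster touched by a good cycle is merged (whether or not its weight was used to pay for the cycle), and the final selection of a $\Theta(k)$-weight piece happens at the end via Lemma~\ref{lem:segment}, which takes a consecutive arc of the forming cycle and attaches its two endpoints to $r$ — that attachment is where the additive $2L$ comes from, cleanly and once, rather than via the ``route through a single pair of $r$-paths across all rounds'' trick you gesture at but do not carry out.

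So: correct skeleton, but the density-per-cycle assumption is wrong, the one-blob growth is not guaranteed to make progress, and the crucial $\log^2 k\to\log k$ improvement (the stated content of the theorem beyond the obvious bound) is left to ``the care'' without an argument. To fix it you would essentially need to reconstruct the tier/parallel-cluster structure, the edge-disjointness of small clusters, and the final segment-extraction lemma.
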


\noindent We can now prove our main result for the \kv problem,
Theorem~\ref{thm:kv}.

\begin{proofof}{Theorem~\ref{thm:kv}}
  Let $\opt$ be the cost of an optimal solution to the (rooted) \kv
  problem.  By Lemma~\ref{lem:greedy}, the total cost of the
  augmentation steps of our greedy algorithm is $O(\log \ell \cdot
  \log k) \opt$. To bound the cost of the pruning step, let $k'$ be
  the number of additional terminals that must be covered just prior
  to this step. The algorithm for the \densV problem returns a graph
  $H$ with $k(H) > k'$ terminals, and density at most $O(\log \ell)
  \frac{\opt}{k'}$. As a result of our pre-processing step, every
  vertex has 2 vertex-disjoint paths to $r$ of total cost at most
  $\opt$. Now, we use Theorem~\ref{thm:avekv} to prune $H$ and find a
  graph $H'$ with $k'$ terminals and cost at most $O(\log k)
  density(H) k' + 2 \opt \le O(\log \ell \cdot \log k) \opt + 2
  \opt$. Therefore, the total cost of our solution is $O(\log \ell
  \cdot \log k) \opt$.
\end{proofof}

\bigskip
It remains only to prove Lemma~\ref{lem:densV}, that there is an $O(\log
\ell)$-approximation for the \densV problem, and
Theorem~\ref{thm:avekv}, bounding the cost of the pruning step. We
prove the former in Section~\ref{subsec:LP} below. Before the latter
is proved in Section~\ref{sec:pruning}, we develop some tools in
Section~\ref{sec:cycles}; chief among these tools is
Theorem~\ref{thm:cycle}. 

\subsection{An $O(\log \ell)$-approximation for the \densV problem}
\label{subsec:LP}

Recall that the \densV~problem was defined as follows: Given a graph
$G(V,E)$ with edge-costs, a set $T \subseteq V$ of terminals, and a
root $r \in V(G)$, find a subgraph $H$ of minimum density, in which
every terminal of $H$ is 2-connected to $r$. (Here, the density of $H$
is defined as the cost of $H$ divided by the number of terminals it
contains, not including $r$.)  We describe an algorithm for \densV
that gives an $O(\log \ell)$-approximation, and sketch its proof. We
use an LP based approach and a bucketing and scaling trick (see
\cite{ChekuriEGS08,ChekuriHKS06,ChekuriHKS07} for applications of this
idea), and a constant-factor bound on the integrality gap of an LP for
SNDP with vertex-connectivity requirements in $\{0,1,2\}$
\cite{FleischerJW}.

\bigskip
We define \densLP as the following LP relaxation of \densV. For each
terminal $t$, the variable $y_t$ indicates whether or not $v$ is
chosen in the solution. (By normalizing $\sum_t y_t$ to 1, and
minimizing the sum of edge costs, we minimize the density.)
$\script{C}_t$ is the set of all simple cycles containing $t$ and the
root $r$; for any $C \in \script{C}_t$, $f_C$ indicates how much
`flow' is sent from $v$ to $r$ through $C$. (Note that a pair of
vertex-disjoint paths is a cycle; the flow along a cycle is 1 if we
can 2-connect $t$ to $r$ using the edges of the cycle.) The variable
$x_e$ indicates whether the edge $e$ is used by the solution.

\newpage
\[\min \sum_{e \in E} c(e) x_e \]
\vspace{-0.15in}
\begin{align*}
  \sum_{t \in T}y_t & = 1 & \\
  \sum_{C \in \script{C}_t}f_C & \ge y_t & \left( \forall t \in T \right) \\
  \sum_{C \in \script{C}_t| e \in C} f_C & \le x_e & \left( \forall t \in T, e  \in E \right) \\
  x_e, f_c, y_t & \ge 0 & \\
\end{align*}

\vspace{-0.15in}
It is not hard to see that an optimal solution to \densLP has cost at
most the density of an optimal solution to \densV. We now show how to
obtain an integral solution of density at most $O(\log \ell)
\opt_{LP}$, where $\opt_{LP}$ is the cost of an optimal solution to
\densLP.  The linear program \densLP has an exponential number of
variables but a polynomial number of non-trivial constraints; it can,
however, be solved in polynomial time. Fix an optimal solution to
\densLP of cost $\opt_{LP}$, and for each $0 \le i < 2 \log \ell$ (for
ease of notation, assume $\log \ell$ is an integer), let $Y_i$ be the
set of terminals $t$ such that $2^{-(i+1)} < y_t \le 2^{-i}$. Since
$\sum_{t \in T}y_t = 1$, there is some index $i$ such that $\sum_{t
  \in Y_i}y_t \ge \frac{1}{2 \log \ell}$. Since every terminal $t \in
Y_i$ has $y_t \le 2^{-i}$, the number of terminals in $Y_i$ is at
least $\frac{2^{i-1}}{\log \ell}$.  We claim that there is a subgraph
$H$ of $G$ with cost at most $O(2^{i+2} \opt_{LP})$, in which every
terminal of $Y_i$ is 2-connected to the root. If this is true, the
density of $H$ is at most $O(\log \ell \cdot \opt_{LP})$, and hence we
have an $O(\log \ell)$-approximation for the \densV problem.

To prove our claim about the cost of the subgraph $H$ in which every
terminal of $Y_i$ is 2-connected to $r$, consider scaling up the given
optimum solution of \densLP by a factor of $2^{i+1}$. For each
terminal $t \in Y_i$, the flow from $t$ to $r$ in this scaled
solution\footnote{This is an abuse of the term `solution', since after
  scaling, $\sum_{t \in T} y_t = 2^{i+1}$} is at least 1, and the cost
of the scaled solution is $2^{i+1} \opt_{LP}$.

In \cite{FleischerJW}, the authors describe a linear program $LP_1$ to
find a minimum-cost subgraph in which a given set of terminals is
2-connected to the root, and show that this linear program has an
integrality gap of 2. The variables $x_e$ in the `scaled solution' to
\densLP correspond to a feasible solution of $LP_1$ with $Y_i$ as the
set of terminals; the integrality gap of 2 implies that there is a
subgraph $H$ in which every terminal of $Y_i$ is 2-connected to the
root, with cost at most $2^{i+2} \opt_{LP}$.

Therefore, the algorithm for \densV is: 
\begin{enumerate}
  \item Find an optimal fractional solution to \densLP.

  \item Find a set of terminals $Y_i$ such that $\sum_{t \in Y_i} y_t
    \ge \frac{1}{2 \log \ell}$.

  \item Find a min-cost subgraph $H$ in which every terminal in $Y_i$
    is 2-connected to $r$ using the algorithm of \cite{FleischerJW}.
    $H$ has density at most $O(\log \ell)$ times the optimal solution
    to \densV.
\end{enumerate}

\section{Finding Low-density Non-trivial Cycles}
\label{sec:cycles}

A cycle $C \subseteq G$ is \emph{non-trivial} if it contains at least 2
terminals.  We define the min-density non-trivial cycle problem: Given a
graph $G(V,E)$, with $S \subseteq V$ marked as terminals, edge costs and
terminal weights, find a minimum-density cycle that contains at least 2
terminals. Note that if we remove the requirement that the cycle be
non-trivial (that is, it contains at least 2 terminals), the problem
reduces to the min-mean cycle problem in directed graphs, and can be solved
exactly in polynomial time (see \cite{networkflows_book}).  Algorithms for
the min-density non-trivial cycle problem are a useful tool for solving the
\kv and \ke problems. In this section, we give an $O(\log
\ell)$-approximation algorithm for the minimum-density non-trivial cycle
problem.

First, we prove Theorem~\ref{thm:cycle}, that a 2-connected graph with
edge costs and terminal weights contains a simple non-trivial cycle,
with density no more than the average density of the graph. We give
two algorithms to find such a cycle; the first, described in
Section~\ref{subsec:nonPoly}, is simpler, but the running time is not
polynomial.  A more technical proof that leads to a strongly
polynomial-time algorithm is described in Section~\ref{subsec:strong};
we recommend this proof be skipped on a first reading.

\subsection{An Algorithm to Find Cycles of Average
  Density}\label{subsec:nonPoly}

To find a non-trivial cycle of density at most that of the 2-connected
input graph $G$, we will start with an arbitrary non-trivial cycle,
and successively find cycles of better density until we obtain a cycle
with density at most $\dens{G}$. The following lemma shows that if a
cycle $C$ has an ear with density less than $\dens{C}$, we can use
this ear to find a cycle of lower density.

\begin{lemma}\label{lem:goodEar}
  Let $C$ be a non-trivial cycle, and $H$ an ear incident to $C$ at
  $u$ and $v$, such that $\frac{cost(H)}{weight(H - \{u,v\})} <
  \dens{C}$. Let $S_1$ and $S_2$ be the two internally disjoint paths
  between $u$ and $v$ in $C$. Then $H \cup S_1$ and $H \cup S_2$ are
  both simple cycles and one of these is non-trivial and has density
  less than $\dens{C}$.
\end{lemma}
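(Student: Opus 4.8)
The plan is to introduce explicit cost and weight parameters for the relevant pieces, dispose of the ``simple cycle'' assertion directly from the definition of an ear, derive the density bound from a mediant (sum-of-fractions) inequality, and finish with a short case analysis guaranteeing that the cheap cycle can be chosen non-trivial.

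First I would fix notation: for $i=1,2$ let $c_i = cost(S_i)$ and let $w_i$ be the total terminal weight of the internal vertices of $S_i$; let $c_H = cost(H)$ and let $w_H$ be the total terminal weight of the internal vertices of $H$; and let $w_{uv}$ be the combined terminal weight of $u$ and $v$. Then $cost(C) = c_1 + c_2$ and $weight(C) = w_1 + w_2 + w_{uv}$, while $cost(H \cup S_i) = c_H + c_i$ and $weight(H\cup S_i) = w_H + w_i + w_{uv}$. The hypothesis is precisely $c_H/w_H < \dens{C}$, so in particular $w_H > 0$ and the interior of $H$ contains at least one terminal; I will also use the standing assumption that terminal weights are positive, so $weight(C) > 0$. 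The ``simple cycle'' part is then immediate: by definition an ear incident to $C$ at $u,v$ is a $u$--$v$ path whose internal vertices avoid $V(C) = V(S_1)\cup V(S_2)$, so $H$ meets each $S_i$ in exactly $\{u,v\}$ and $H \cup S_i$ is a pair of internally disjoint $u$--$v$ paths, i.e.\ a simple cycle.

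For the density claim I would invoke the mediant inequality $\min(a/b,c/d) \le (a+c)/(b+d)$ (for $b,d>0$) applied to the two candidate cycles, which gives
\[ \min\bigl(\dens{H\cup S_1},\,\dens{H\cup S_2}\bigr) \ \le\ \frac{2c_H + cost(C)}{2w_H + weight(C) + w_{uv}} . \]
A one-line cross-multiplication, using $c_H \cdot weight(C) < w_H \cdot cost(C)$ (the hypothesis) together with $cost(C), w_{uv} \ge 0$, shows the right-hand side is strictly below $cost(C)/weight(C) = \dens{C}$. Hence at least one of $H \cup S_1, H\cup S_2$ beats the density of $C$.

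The only delicate step is ensuring the cheap cycle is also non-trivial. I would first note that a trivial $H \cup S_i$ contains at most one terminal and already contains a terminal from the interior of $H$, so it contains exactly one, which forces $w_i = 0$ and $w_{uv} = 0$; consequently $H\cup S_1$ and $H\cup S_2$ cannot both be trivial, for that would give $weight(C) = 0$, contradicting non-triviality of $C$. If both are non-trivial, take the one of smaller density. If exactly one, say $H\cup S_1$, is trivial, then $w_1 = w_{uv} = 0$, so $weight(C) = w_2$ and $\dens{C} = (c_1+c_2)/w_2$; a direct computation from $c_H w_2 < (c_1+c_2) w_H$ and $c_1 \ge 0$ gives $\dens{H\cup S_2} = (c_H + c_2)/(w_H + w_2) < \dens{C}$, and $H \cup S_2$ is non-trivial. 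The main obstacle is exactly this coupling between which of the two cycles is cheap and which one carries two terminals; once one observes that at most one of them can be trivial and handles that lone asymmetric case, the remainder is routine arithmetic.
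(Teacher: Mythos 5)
Your proof is correct but organized differently from the paper's, so a comparison is worthwhile. The paper picks the lower-density arc by a WLOG choice $c_1/w_1 \le c_2/w_2$ (with a footnote to handle the degenerate $0/0$ case), observes that the chosen arc must contain a terminal, and then does one cross-multiplication showing that attaching $H$ to that arc gives a cheap non-trivial cycle; the WLOG choice delivers density and non-triviality in one stroke. You instead decouple the two: the mediant inequality shows immediately that at least one of $H\cup S_1$, $H\cup S_2$ beats $\dens{C}$ without deciding which arc is sparser, and you then run a case analysis on triviality (both non-trivial, exactly one trivial, both trivial) to ensure the cheap cycle can be taken non-trivial, with a separate direct computation in the asymmetric case. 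Both routes ultimately reduce to the same cross-multiplication from the ear hypothesis $c_H \cdot weight(C) < cost(C) \cdot w_H$. The paper's version is a touch shorter, but your treatment is more systematic and makes explicit two points the paper leaves implicit: how the weights of $u,v$ are accounted for when splitting $C$ into $S_1, S_2$, and the standing assumption that terminal weights are positive (needed to pass from ``$C$ has two terminals'' to ``$weight(C) > 0$'').
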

\begin{proof}
  $C$ has at least 2 terminals, so it has finite density; $H$ must
  then have at least 1 terminal. Let $c_1$, $c_2$ and $c_H$ be,
  respectively, the sum of the costs of the edges in $S_1$, $S_2$ and
  $H$, and let $w_1$, $w_2$ and $w_H$ be the sum of the weights of the
  terminals in $S_1$, $S_2$ and $H-\{u,v\}$.

  Assume w.l.o.g. that $S_1$ has density at most that of $S_2$. (That
  is, $c_1/w_1 \le c_2/w_2$.)\footnote{It is possible that one of
    $S_1$ and $S_2$ has cost 0 and weight 0. In this case, let $S_1$
    be the component with non-zero weight.}  $S_1$ must contain at
  least one terminal, and so $H \cup S_1$ is a simple non-trivial
  cycle.  The statement $\dens{H \cup S_1} < \dens{C}$ is equivalent
  to $(c_H + c_1)(w_1 + w_2) < (c_1 + c_2) (w_H + w_1)$.
  \vspace{-0.2in}

  \begin{align*}
    (c_H + c_1)(w_1 + w_2) &= c_1w_1 + c_1w_2 + c_H(w_1 + w_2)\\
    &\le c_1w_1 + c_2w_1 + c_H(w_1 + w_2) & (\dens{S_1} \le \dens{S_2})\\
    &< c_1w_1 + c_2w_1 + (c_1 + c_2)w_H   & (c_H/w_H < \dens{C})\\
    &= (c_1 + c_2) (w_H + w_1)\\
  \end{align*}

  \vspace{-0.2in}
  Therefore, $H \cup S_1$ is a simple cycle containing at least 2 terminals of
  density less than $\dens{C}$.
\end{proof}

\begin{lemma}\label{lem:2connComp}
  Given a cycle $C$ in a $2$-connected graph $G$, let $G'$ be the graph
  formed from $G$ by contracting $C$ to a single vertex $v$. If $H$ is
  a connected component of $G' - v$, $H \cup \{v\}$ is $2$-connected in
  $G'$.
\end{lemma}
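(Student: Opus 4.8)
The plan is to prove Lemma~\ref{lem:2connComp} directly from the definition of $2$-vertex-connectivity: I must show that $H \cup \{v\}$ has no cutvertex, i.e., that for every vertex $z$ in $H \cup \{v\}$, the graph $(H \cup \{v\}) - z$ is connected (and that $H \cup \{v\}$ has at least $3$ vertices, which holds since $G$ is $2$-connected and $C$ is a proper cycle so $G' - v$ is nonempty, hence $H$ has at least one vertex; if $H$ is a single vertex $u$, then $\{u,v\}$ together with some third vertex forming a cycle through both — more carefully, $2$-connectivity of a $2$-vertex graph is vacuous/by convention, so I should note $H \cup \{v\}$ is trivially $2$-connected when $|H| = 1$, or handle it via the edge between them being on a cycle). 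The interesting case is a cutvertex $z \neq v$ lying in $H$.

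**First I would** observe that contraction cannot destroy $2$-connectivity in the relevant sense: since $G$ is $2$-connected and $C$ is a connected subgraph, $G'$ (with $C$ contracted to $v$) is also $2$-connected — contracting a connected subgraph of a $2$-connected graph yields a $2$-connected graph (a standard fact, provable by lifting paths). **Then** suppose for contradiction that some $z \in V(H)$ is a cutvertex of $H \cup \{v\}$. Since $z \neq v$ and $H$ is a connected component of $G' - v$, no vertex outside $H \cup \{v\}$ is adjacent to $H$; therefore the neighborhood of $H$ in $G'$ is contained in $\{v\}$, and removing $z$ from $H \cup \{v\}$ disconnects it into pieces $A_1, \dots, A_m$ ($m \ge 2$). **The key step** is then to argue that at most one of these pieces can contain $v$, so some piece $A_j$ contains no vertex of $\{v\} \cup (V(G') \setminus V(H))$; hence in $G'$, the only way to leave $A_j$ is through $z$, so $z$ is a cutvertex of $G'$, contradicting the $2$-connectivity of $G'$ established above.

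**The main obstacle** I anticipate is the bookkeeping around the boundary behavior of the component $H$ — specifically making airtight the claim that $H$'s only possible neighbor outside itself in $G'$ is $v$ (this is exactly what "connected component of $G' - v$" gives us, but it must be invoked cleanly), and handling the degenerate cases ($|V(H)| = 1$, or $z$ being the unique neighbor of $v$ within $H$). A secondary subtlety is stating precisely which notion of $2$-connected is used for small graphs; I will adopt the convention that a graph on at most two vertices with an edge is considered $2$-connected (or simply note $H \cup \{v\}$ always has enough vertices because $G$ being $2$-connected forces $H$ to attach to $C$ at two distinct vertices, hence $|V(H)| \ge 1$ and there is a cycle in $G'$ through $v$ and any vertex of $H$). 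Once these edge cases are dispatched, the contradiction argument via the "isolated" component $A_j$ is short and clean.
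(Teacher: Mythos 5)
Your approach differs from the paper's in a genuine way: the paper argues directly, showing first that $v$ is $2$-connected to every $x\in H$ in $H\cup\{v\}$ (by lifting a separating set back to a separator of $x$ from $C$ in $G$) and then finishing with a fan argument; you instead route through the contracted graph $G'$ and a component-of-a-cutvertex argument. But the lemma you lean on is false as stated. Contracting a connected subgraph (even a cycle) of a $2$-connected graph does \emph{not} in general yield a $2$-connected graph: take $G$ to be the $4$-cycle $a$--$b$--$c$--$d$--$a$ together with two extra vertices $x$ (adjacent to $a$ and $c$) and $y$ (adjacent to $b$ and $d$). This $G$ is $2$-connected, but contracting $C=a$--$b$--$c$--$d$--$a$ to $v$ gives a graph on $\{v,x,y\}$ in which $v$ is a cutvertex separating $x$ from $y$. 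So the ``contradicting the $2$-connectivity of $G'$'' step, as you have phrased it, does not go through.

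What \emph{is} true, and all your argument needs, is the weaker statement that the only possible cutvertex of $G'$ is $v$ itself: if $z\neq v$, then $G-z$ is connected by $2$-connectivity of $G$, and projecting paths of $G-z$ into $G'$ (collapsing $C$ to $v$) shows $G'-z$ is connected. Since your hypothetical cutvertex $z$ lies in $V(H)$ and hence $z\neq v$, this corrected fact supplies the contradiction and your remaining bookkeeping (the neighborhood of $H$ in $G'$ is contained in $\{v\}$, some piece $A_j$ avoids $v$, so all edges leaving $A_j$ go to $z$) is sound. So the strategy is salvageable and genuinely different from the paper's, but you should replace the blanket claim ``contraction preserves $2$-connectivity'' with the precise statement that contraction of a connected subgraph cannot create a cutvertex other than the contracted vertex, and prove that by the path-lifting argument you alluded to. You also still need to dispatch the $|V(H)|=1$ degenerate case as you noted.
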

\begin{proof}
  Let $H$ be an arbitrary connected component of $G' - v$, and let $H'
  = H \cup \{v\}$. To prove that $H'$ is 2-connected, we first observe
  that $v$ is 2-connected to any vertex $x \in H$. (Any set that
  separates $x$ from $v$ in $H'$ separates $x$ from the cycle $C$ in
  $G$.)

  It now follows that for all vertices $x,y \in V(H)$, $x$ and $y$ are
  2-connected in $H'$. Suppose deleting some vertex $u$ separates $x$
  from $y$. The vertex $u$ cannot be $v$, since $H$ is a connected
  component of $G' - v$. But if $u \neq v$, $v$ and $x$ are in the
  same component of $H' - u$, since $v$ is 2-connected to $x$ in
  $H'$. Similarly, $v$ and $y$ are in the same component of $H' - u$,
  and so deleting $u$ does not separate $x$ from $y$.
\end{proof}

We now show that given any 2-connected graph $G$, we can find a
non-trivial cycle of density no more than that of $G$.

\begin{theorem}\label{thm:cycleExists} 
  Let $G$ be a $2$-connected graph with at least $2$ terminals. $G$
  contains a simple non-trivial cycle $X$ such that $\dens{X} \le
  \dens{G}$.
\end{theorem}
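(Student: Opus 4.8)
The plan is to take $C$ to be a non-trivial simple cycle of $G$ of \emph{minimum} density and to prove $\dens{C}\le\dens{G}$ by contradiction; such a $C$ exists since $G$, being $2$-connected with at least two terminals, has two terminals on a common simple cycle, and there are only finitely many simple cycles. So suppose $\dens{C}>\dens{G}$, and write $\rho=\dens{C}$. From $cost(C)=\rho\cdot weight(V(C)\cap S)$ together with $cost(G)<\rho\cdot weight(V(G)\cap S)$ one gets, after cross-multiplying, $cost\bigl(E(G)\setminus E(C)\bigr)<\rho\cdot weight(S\setminus V(C))$; since the left-hand side is nonnegative, this forces $weight(S\setminus V(C))>0$, so there is a terminal outside $C$ and $V(C)\subsetneq V(G)$.

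The crux is to produce an \emph{ear} $H$ attached to $C$ at two \emph{distinct} vertices with $cost(H)<\rho\cdot weight\bigl(\mathrm{int}(H)\cap S\bigr)$: once we have this, Lemma~\ref{lem:goodEar} applied to $C$ and $H$ returns a non-trivial simple cycle of density strictly below $\rho=\dens{C}$, contradicting the minimality of $C$. To find $H$, I would contract $C$ to a single weight-$0$ vertex $v$, forming $\hat G=G/C$ (deleting loops). Since $E(\hat G)\subseteq E(G)\setminus E(C)$ and $v$ carries no weight, the displayed inequality says precisely that $\dens{\hat G}<\rho$. The graph $\hat G$ may fail to be $2$-connected, as $v$ could be a cut vertex, but Lemma~\ref{lem:2connComp} shows that each ``block at $v$'', namely $G''=K\cup\{v\}$ for $K$ a connected component of $G-V(C)$, is $2$-connected; and since the edges and the terminal weight of $\hat G$ split over these blocks, some $G''=K\cup\{v\}$ has $\dens{G''}<\rho$.

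I then recurse on $G''$, which has strictly fewer vertices than $G$ (as $|V(C)|\ge 3$). If $K$ contains at least two terminals, then $G''$ is $2$-connected with at least two terminals, so by induction it has a non-trivial simple cycle $Y$ with $\dens{Y}\le\dens{G''}<\rho$, all of whose terminals are genuine (distinct from $v$). If $Y$ avoids $v$, it is already a non-trivial cycle of $G$ of density below $\rho$ --- contradicting minimality. If $Y$ passes through $v$, un-contracting $v$ back to $C$ replaces $Y$'s two edges at $v$ by edges to vertices $c_1,c_2$ of $C$: if $c_1\ne c_2$ we obtain an ear $H$ of $C$ with $cost(H)=cost(Y)$ and with all of $Y$'s terminals in its interior, so Lemma~\ref{lem:goodEar} finishes; and if $c_1=c_2$ we obtain a simple cycle of $G$ meeting $C$ only at $c_1$ but still carrying all $\ge 2$ terminals of $Y$, hence a non-trivial cycle of $G$ of density below $\rho$ --- again a contradiction. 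This leaves only the case that $K$ contains exactly one terminal $a$, with $w(a)=weight(K\cap S)>0$ and hence $cost(G'')<\rho\,w(a)$; here I use that $G''$ is $2$-connected to pick two internally disjoint $a$--$v$ paths, whose union is a simple cycle $D$ through $a$ and $v$ with $cost(D)\le cost(G'')<\rho\,w(a)$, and un-contract $D$ to a path or cycle of $G$ meeting $C$ in at most two vertices and with $a$ in its interior.

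The step I expect to be the main obstacle is precisely this last one: a priori the two edges of $D$ at $v$ could un-contract to edges of $G$ leaving $C$ from the \emph{same} vertex $c$, so that $D$ un-contracts to a cycle meeting $C$ only at $c$, which need not be non-trivial (it carries only the terminal $a$, and $c$ need not be a terminal). The way out is that $v$ cannot have \emph{all} of its edges leaving $C$ from a single vertex $c$ --- that would make $c$ a cut vertex of $G$ separating $K$ from the rest --- so $G''$ has an edge at $v$ that un-contracts to some $c'\ne c$; a short case analysis, using that $G''$ is $2$-connected and that $a$ itself has degree at least $2$ in $G''$, lets us reroute one of the two $a$--$v$ paths through such an edge while keeping $a$ on the cycle, making $D$'s two $v$-edges land on distinct vertices of $C$. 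Then $D$ un-contracts to a genuine ear $H$ of $C$ with $cost(H)<\rho\,w(a)\le\rho\cdot weight(\mathrm{int}(H)\cap S)$, and Lemma~\ref{lem:goodEar} applies. Since every case contradicts the minimality of $C$, we conclude $\dens{C}\le\dens{G}$; reading the argument constructively (start from an arbitrary non-trivial cycle and repeatedly apply Lemma~\ref{lem:goodEar} to the ears it produces) gives the iterative, algorithmic form of the statement.
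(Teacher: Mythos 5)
Your proof is correct and takes essentially the same route as the paper: contract $C$, apply Lemma~\ref{lem:2connComp} to decompose into $2$-connected blocks at the contracted vertex, pick the block of lowest density, then either recurse on that block or apply Lemma~\ref{lem:goodEar} to an ear of $C$ obtained by un-contracting. The only differences are cosmetic: you cast the argument extremally (take a minimum-density non-trivial cycle and derive a contradiction) while the paper iteratively improves an arbitrary starting cycle, and for the one-terminal block the paper sidesteps your ad hoc ``rerouting'' case analysis by directly invoking the fan version of Menger's theorem in $G$ -- since $G$ is $2$-connected and the terminal lies off $C$, there are two paths from it to $V(C)$ meeting $C$ only at distinct endpoints and disjoint except at the terminal, and these necessarily stay inside $K$ together with its edges to $C$.
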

\begin{proof}
  Let $C$ be an arbitrary non-trivial simple cycle; such a cycle
  always exists since $G$ is $2$-connected and has at least 2
  terminals.  If $\dens{C} > \dens{G}$, we give an algorithm that
  finds a new non-trivial cycle $C'$ such that $\dens{C'} < \dens{C}$.
  Repeating this process, we obtain cycles of successively better
  densities until eventually finding a non-trivial cycle $X$ of
  density at most $\dens{G}$.

  Let $G'$ be the graph formed by contracting the given cycle $C$ to a
  single vertex $v$. In $G'$, $v$ is not a terminal, and so has weight
  0. Consider the 2-connected components of $G'$ (from
  Lemma~\ref{lem:2connComp}, each such component is formed by adding
  $v$ to a connected component of $G' - v$), and pick the one of
  minimum density. If $H$ is this component, $\dens{H} < \dens{G}$ by
  an averaging argument.

  $H$ contains at least 1 terminal. If it contains 2 or more
  terminals, recursively find a non-trivial cycle $C'$ in $H$ such
  that $\dens{C'} \le \dens{H} < \dens{C}$. If $C'$ exists in the
  given graph $G$, it has the desired properties, and we are
  done. Otherwise, $C'$ contains $v$, and the edges of $C'$ form a ear
  of $C$ in the original graph $G$. The density of this ear is less
  than the density of $C$, so we can apply Lemma~\ref{lem:goodEar} to
  obtain a non-trivial cycle in $G$ that has density less than
  $\dens{C}$.

  Finally, if $H$ has exactly 1 terminal $u$, find any 2
  vertex-disjoint paths using edges of $H$ from $u$ to distinct
  vertices in the cycle $C$. (Since $G$ is 2-connected, there always
  exist such paths.) The cost of these paths is at most $cost(H)$, and
  concatenating these 2 paths corresponds to a ear of $C$ in $G$.  The
  density of this ear is less than $\dens{C}$; again, we use
  Lemma~\ref{lem:goodEar} to obtain a cycle in $G$ with the desired
  properties.
\end{proof}

We remark again that the algorithm of Theorem~\ref{thm:cycleExists} does
not lead to a polynomial-time algorithm, even if all edge costs and
terminal weights are polynomially bounded. In Section~\ref{subsec:strong},
we describe a strongly polynomial-time algorithm that, given a graph $G$,
finds a non-trivial cycle of density at most that of $G$.
Note that neither of these algorithms may directly give a good
approximation to the min-density non-trivial cycle problem, because
the optimal non-trivial cycle may have density much less than that of
$G$.  However, we can use Theorem~\ref{thm:cycleExists} to prove the
following theorem:

\begin{theorem}\label{thm:equivalence}
  There is an $\alpha$-approximation to the (unrooted) \densV problem
  if and only if there is an $\alpha$-approximation to the problem of
  finding a minimum-density non-trivial cycle.
\end{theorem}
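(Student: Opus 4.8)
The plan is to prove both implications using Theorem~\ref{thm:cycleExists} (and, for the algorithmic statements, its strongly polynomial-time refinement, Theorem~\ref{thm:cycle}) as the bridge. The crucial preliminary observation is that the two problems have the same optimum value on any instance $G$ (with the same edge costs and terminal weights, and with the natural convention that a feasible \densV solution is a $2$-connected subgraph containing at least two terminals). On the one hand, every simple non-trivial cycle is a $2$-connected subgraph with at least two terminals, hence a feasible \densV solution, so the \densV-optimum is at most the cycle-optimum. On the other hand, an optimal \densV solution $H^\ast$ is a $2$-connected graph with at least two terminals, so by Theorem~\ref{thm:cycleExists} it contains a simple non-trivial cycle $C$ with $\dens{C} \le \dens{H^\ast}$; hence the cycle-optimum is at most the \densV-optimum. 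Thus the two optima coincide (and if one instance is infeasible, so is the other, and both directions are vacuous).

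Given this, the reduction ``$\alpha$-approximation for the cycle problem $\Rightarrow$ $\alpha$-approximation for \densV'' is immediate: on a \densV instance, run the cycle approximation on the same graph; it returns a simple non-trivial cycle $C$ of density at most $\alpha$ times the cycle-optimum, which equals $\alpha$ times the \densV-optimum, and $C$ is itself a feasible \densV solution.

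For the reverse reduction ``$\alpha$-approximation for \densV $\Rightarrow$ $\alpha$-approximation for the cycle problem'': on an instance $G$ of the minimum-density non-trivial cycle problem, run the \densV approximation on $G$ to obtain a $2$-connected subgraph $H$ with $\dens{H}$ at most $\alpha$ times the \densV-optimum, which equals $\alpha$ times the cycle-optimum. Then apply the polynomial-time algorithm of Theorem~\ref{thm:cycle} to $H$ to extract a simple non-trivial cycle $C \subseteq H$ with $\dens{C} \le \dens{H}$, and output $C$.

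The step requiring the most care is this reverse reduction: it is essential to invoke the strongly polynomial-time cycle-extraction algorithm of Theorem~\ref{thm:cycle} rather than the pure existence statement of Theorem~\ref{thm:cycleExists}, since otherwise we would not obtain a genuine polynomial-time approximation — this is precisely why the non-trivial algorithmic strengthening of Section~\ref{subsec:strong} is needed. One should also dispatch the mild degenerate cases (checking that the \densV output is genuinely $2$-connected with at least two terminals so the extraction applies, and that the terminal weight/count conventions are used consistently across the two reductions), but beyond that everything is bookkeeping.
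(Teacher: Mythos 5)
Your proof is correct and follows essentially the same route as the paper's: you observe that the two optima coincide because every non-trivial cycle is a feasible (unrooted) \densV solution while, by Theorem~\ref{thm:cycleExists}, every feasible \densV solution contains a non-trivial cycle of no greater density, and you then use the polynomial-time extraction of Theorem~\ref{thm:cycle} for the \densV-to-cycle direction. Your remark that the algorithmic strengthening of Section~\ref{subsec:strong} (as opposed to mere existence) is what makes the reverse reduction run in polynomial time is a correct and worthwhile point that the paper leaves implicit.
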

\begin{proof}
  Assume we have a $\gamma(\ell)$-approximation for the \densV
  problem; we use it to find a low-density non-trivial cycle. Solve
  the \densV problem on the given graph; since the optimal cycle is a
  2-connected graph, our solution $H$ to the \densV problem has
  density at most $\gamma(\ell)$ times the density of this cycle.
  Find a non-trivial cycle in $H$ of density at most that of $H$; it
  has density at most $\gamma(\ell)$ times that of an optimal
  non-trivial cycle.

  Note that any instance of the (unrooted) \densV problem has an
  optimal solution that is a non-trivial cycle.  (Consider any optimal
  solution $H$ of density $\rho$; by Theorem~\ref{thm:cycle}, $H$
  contains a non-trivial cycle of density at most $\rho$. This cycle
  is a valid solution to the \densV problem.) Therefore, a
  $\beta(\ell)$-approximation for the min-density non-trivial cycle
  problem gives a $\beta(\ell)$-approximation for the \densV problem.
\end{proof}

Theorem~\ref{thm:equivalence} and Lemma~\ref{lem:densV} imply 
an $O(\log \ell)$-approximation for the minimum-density non-trivial cycle
problem; this proves Corollary~\ref{cor:cycle}.

We say that a graph $G(V,E)$ is minimally 2-connected on its terminals
if for every edge $e \in E$, some pair of terminals is not 2-connected
in the graph $G - e$.  Section~\ref{subsec:strong} shows that in any
graph which is minimally 2-connected on its terminals, every cycle is
non-trivial.  Therefore, the problem of finding a minimum-density
non-trivial cycle in such graphs is just that of finding a
minimum-density cycle, which can be solved exactly in polynomial
time. However, as we explain at the end of the section, this does not
directly lead to an efficient algorithm for arbitrary graphs.

\subsection{A Strongly Polynomial-time Algorithm to Find Cycles of Average
  Density}\label{subsec:strong}

In this section, we describe a strongly polynomial-time algorithm
which, given a 2-connected graph $G(V,E)$ with edge costs and terminal
weights, finds a non-trivial cycle of density at most that of $G$.

We begin with several definitions: Let $C$ be a cycle in a graph $G$,
and $G'$ be the graph formed by deleting $C$ from $G$. Let $H_1, H_2,
\ldots H_m$ be the connected components of $G'$; we refer to these as
\emph{earrings} of $C$.\footnote{If $H_i$ were simply a path, it would be an
  ear of $C$, but $H_i$ may be more complex.}  For each $H_i$, let the
vertices of $C$ incident to it be called its \emph{clasps}. From the
definition of an earring, for any pair of clasps of $H_i$, there is a
path between them whose internal vertices are all in $H_i$. 

We say that a vertex of $C$ is an \emph{anchor} if it is the clasp of
some earring. (An anchor may be a clasp of multiple earrings.)  A
\emph{segment} $S$ of $C$ is a path contained in $C$, such that the
endpoints of $S$ are both anchors, and no internal vertex of $S$ is an
anchor. (Note that the endpoints of $S$ might be clasps of the same
earring, or of distinct earrings.) It is easy to see that the segments
partition the edge set of $C$. By deleting a segment, we refer to
deleting its edges and internal vertices. Observe that if $S$ is
deleted from $G$, the only vertices of $G - S$ that lose an edge are
the endpoints of $S$. A segment is \emph{safe} if the graph $G-S$ is
2-connected.

Arbitrarily pick a vertex $o$ of $C$ as the \emph{origin}, and
consecutively number the vertices of $C$ clockwise around the cycle as
$o = c_0, c_1, c_2, \ldots, c_r = o$. The first clasp of an earring
$H$ is its lowest numbered clasp, and the last clasp is its highest
numbered clasp. (If the origin is a clasp of $H$, it is considered the
first clasp, not the last.) The \emph{arc} of an earring is the
subgraph of $C$ found by traversing clockwise from its first clasp
$c_p$ to its last clasp $c_q$; the length of this arc is $q-p$. (That
is, the length of an arc is the number of edges it contains.) Note
that if an arc contains the origin, it must be the first vertex of the
arc.  Figure~\ref{fig:earring} illustrates several of these
definitions.

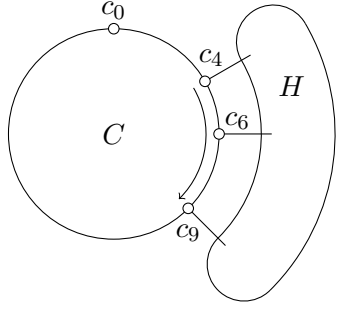
\begin{figure}
  \begin{center}
    \begin{tikzpicture}[scale=0.7]
      
      
      \draw (0,0) circle (2cm); \node at (0,0) {$C$};
      \draw (30:2cm) -- (30:3cm); \draw (2,0) -- (3,0); \draw
      (-45:2cm) -- (-45:3cm);
      \node at (38:2.35cm) {$c_4$}; \node at (2.35,0.25) {$c_6$}; \node at (-53:2.35cm) {$c_9$};
      \node at (0,2.35) {$c_0$};

       \draw (30:2.8cm) arc(30:-45:2.8cm); \draw (30:4.2cm) arc (30:-45:4.2cm);
       \draw (30:2.8cm) arc (-150:-330:0.7cm);
       \draw (-45:2.8cm) arc (135:315:0.7cm);
       \node at (15:3.5cm) {$H$};

       \draw[fill=white] (30:2cm) circle (1mm); \draw[fill=white] (-45:2cm) circle (1mm);
       \draw[fill=white] (2,0) circle (1mm); \draw[fill=white] (0,2) circle (1mm);

       \draw[->] (30:1.75cm) arc (30:-45:1.75cm);
    \end{tikzpicture}
  \end{center}
\caption{$H$ is an earring of $G$, with clasps $c_4, c_6, c_9$; $c_4$
  is its first clasp, and $c_9$ its last clasp. The arrow indicates
  the arc of $H$.}\label{fig:earring}
\end{figure}

\begin{figure}[tbh]
  \begin{center}
    \begin{tikzpicture}[scale=0.75]
      
      
      \begin{scope}[rotate=-60]
        \draw (0,0) circle (2cm); \node at (0,0) {$C$};
        \draw[red] (0,2) -- (0,3.6) arc(90:-90:3.6cm) -- (0,-2);
        \node at (0,1.6) {$c_a$}; \node at (0,-1.6) {$c_b$};
        \node at (0,3.9) [font=\footnotesize] {$H \in \script{H}$};
        
        \draw[blue] (30:2cm) -- (30:3cm) arc(30:-30:3cm) -- (-30:2cm);
        \node at (30:1.6cm) {$c_p$}; \node at (-30:1.6cm) {$c_q$};
        
        \draw[dotted] (30:2.2cm) arc(30:330:2.2cm);
        \draw[dashed] (25:2.2cm) -- (25:2.8cm) arc(25:-25:2.8cm) -- (-25:2.2cm);
        
        \draw[fill=white] (30:2cm) circle (1mm); \draw[fill=white] (-30:2cm) circle (1mm);
        \draw[fill=white] (0,2) circle (1mm); \draw[fill=white] (0,-2) circle (1mm);
        \draw[fill=white] (150:2cm) circle (1mm); \node at (150:1.6cm) {$c_0$};
      \end{scope}

      \begin{scope}[xshift=8cm,rotate=-60]

        \draw (0,0) circle (2cm); \node at (0,0) {$C$};
        \draw[red] (0,2) -- (0,3.6) arc(90:-90:3.6cm) -- (0,-2);
        \node at (0,1.6) {$c_a$}; \node at (0,-1.6) {$c_b$};
        \node at (0,3.9) [font=\footnotesize] {$H_1$};
        \node at (30:1.6cm) {$c_p$}; \node at (-30:1.6cm) {$c_q$};

        \draw[red] (30:2cm) -- (30:3.6cm);

        \draw[blue] (88:2cm) -- (85:3cm) arc(85:-85:3cm) -- (-88:2cm);
        \node at (85:3.3cm) [font=\footnotesize] {$H_2$};
        \draw[blue] (-30:2cm) -- (-30:3cm);

        \draw[dotted] (25:2.2cm) -- (25:3.4cm) arc (25:-88:3.4cm) --
        (-88:2.2cm) arc (-88:-35:2.2cm);
        \draw[dashed] (-25:2.2cm) -- (-25:2.8cm) arc (-25:80:2.8cm) --
        (80:2.2cm) arc (80:35:2.2cm);

        \draw[fill=white] (30:2cm) circle (1mm); \draw[fill=white] (-30:2cm) circle (1mm);
        \draw[fill=white] (0,2) circle (1mm); \draw[fill=white] (0,-2) circle (1mm);
        \draw[fill=white] (150:2cm) circle (1mm); \node at (150:1.6cm) {$c_0$};
      \end{scope}

      \begin{scope}[xshift=16cm,rotate=-60]

        \draw (0,0) circle (2cm); \node at (0,0) {$C$};
        \draw[red] (0,2) -- (0,3.6) arc(90:-90:3.6cm) -- (0,-2);
        \node at (0,1.6) {$c_a$}; \node at (0,-1.6) {$c_b$};
        \node at (0,3.9) [font=\footnotesize] {$H \in \script{H}$};
        \node at (30:1.6cm) {$c_p$}; \node at (-30:1.6cm) {$c_q$};

        \draw[blue] (30:2cm) -- (30:3.1cm) arc (30:-105:3.1cm) -- (-105:2cm);
        \node at (38:2.55cm) [font=\footnotesize] {$H_1$};
        \draw[green] (-30:2cm) -- (-30:2.5cm) arc(-30:-135:2.5cm) -- (-135:2cm);
        \node at (-143:2.5cm) [font=\footnotesize] {$H_2$};

        \draw[dotted] (35:2.15cm) arc (35:85:2.15cm) -- (85:3.45cm) arc
        (85:-85:3.45cm) -- (-85:2.15cm) arc (-85:-35:2.15cm);
        \draw[dashed] (25:2.2cm) -- (25:2.95cm) arc(25:-100:2.95cm) --
        (-100:2.15cm) arc (-100:-130:2.15cm) -- (-130:2.65cm) arc
        (-130:-25:2.65cm) -- (-25:2.15cm);

        \draw[fill=white] (30:2cm) circle (1mm); \draw[fill=white] (-30:2cm) circle (1mm);
        \draw[fill=white] (0,2) circle (1mm); \draw[fill=white] (0,-2) circle (1mm);
        \draw[fill=white] (150:2cm) circle (1mm); \node at (150:1.6cm) {$c_0$};
      \end{scope}

      \begin{scope}[xshift=4cm,yshift=-7cm,rotate=-60]

        \draw (0,0) circle (2cm); \node at (0,0) {$C$};
        \draw[red] (0,2) -- (0,3.6) arc(90:-90:3.6cm) -- (0,-2);
        \node at (0,1.6) {$c_a$}; \node at (0,-1.6) {$c_b$};
        \node at (0,4.1) [font=\footnotesize] {$H_1 \in \script{H}$};
        \node at (30:1.6cm) {$c_p$}; \node at (-30:1.6cm) {$c_q$};
        \draw[red] (30:2cm) -- (30:3.6cm);

        \draw[blue] (-30:2cm) -- (-30:3cm) arc(-30:-120:3cm) -- (-120:2cm);
        \node at (-125:3cm) [font=\footnotesize] {$H_2$};
        
        \draw[dashed] (-35:2.2cm) arc (-35:-85:2.2cm) -- (-85:3.4cm)
        arc (-85:25:3.4cm) -- (25:2.2cm);
        \draw[dotted] (-35:2.2cm) -- (-35:2.8cm) arc(-35:-115:2.8cm)
        -- (-115:2.2cm) arc (-115:-330:2.2cm);

        \draw[fill=white] (30:2cm) circle (1mm); \draw[fill=white] (-30:2cm) circle (1mm);
        \draw[fill=white] (0,2) circle (1mm); \draw[fill=white] (0,-2) circle (1mm);
        \draw[fill=white] (150:2cm) circle (1mm); \node at (150:1.6cm) {$c_0$};
      \end{scope}

      \begin{scope}[xshift=12cm,yshift=-7cm,rotate=-60]

        \draw (0,0) circle (2cm); \node at (0,0) {$C$};
        \draw[red] (0,2) -- (0,3.6) arc(90:-90:3.6cm) -- (0,-2);
        \node at (0,1.6) {$c_a$}; \node at (0,-1.6) {$c_b$};
        \node at (0,4.1) [font=\footnotesize] {$H_1 \in \script{H}$};
        \node at (30:1.6cm) {$c_p$}; \node at (-30:1.6cm) {$c_q$};
        \draw[red] (-30:2cm) -- (-30:3.6cm);

        \draw[blue] (30:2cm) -- (30:3cm) arc(30:-120:3cm) -- (-120:2cm);
        \node at (-125:3cm) [font=\footnotesize] {$H_2$};
        
        \draw[dashed] (-25:2.2cm) -- (-25:3.4cm) arc (-25:85:3.4cm) -- (85:2.2cm)
        arc (85:35:2.2cm);
        \draw[dotted] (-35:2.2cm) arc (-35:-115:2.2cm) -- (-115:2.8cm) arc(-115:35:2.8cm)
        -- (35:2.2cm);

        \draw[fill=white] (30:2cm) circle (1mm); \draw[fill=white] (-30:2cm) circle (1mm);
        \draw[fill=white] (0,2) circle (1mm); \draw[fill=white] (0,-2) circle (1mm);
        \draw[fill=white] (150:2cm) circle (1mm); \node at (150:1.6cm) {$c_0$};
      \end{scope}

    \end{tikzpicture}
  \end{center}
  \caption{The various cases of Theorem~\ref{thm:earringProof} are
    illustrated in the order presented. In each case, one of the 2
    vertex-disjoint paths from $c_p$ to $c_q$ is indicated with dashed
    lines, and the other with dotted lines.}\label{fig:earringProof}
\end{figure}
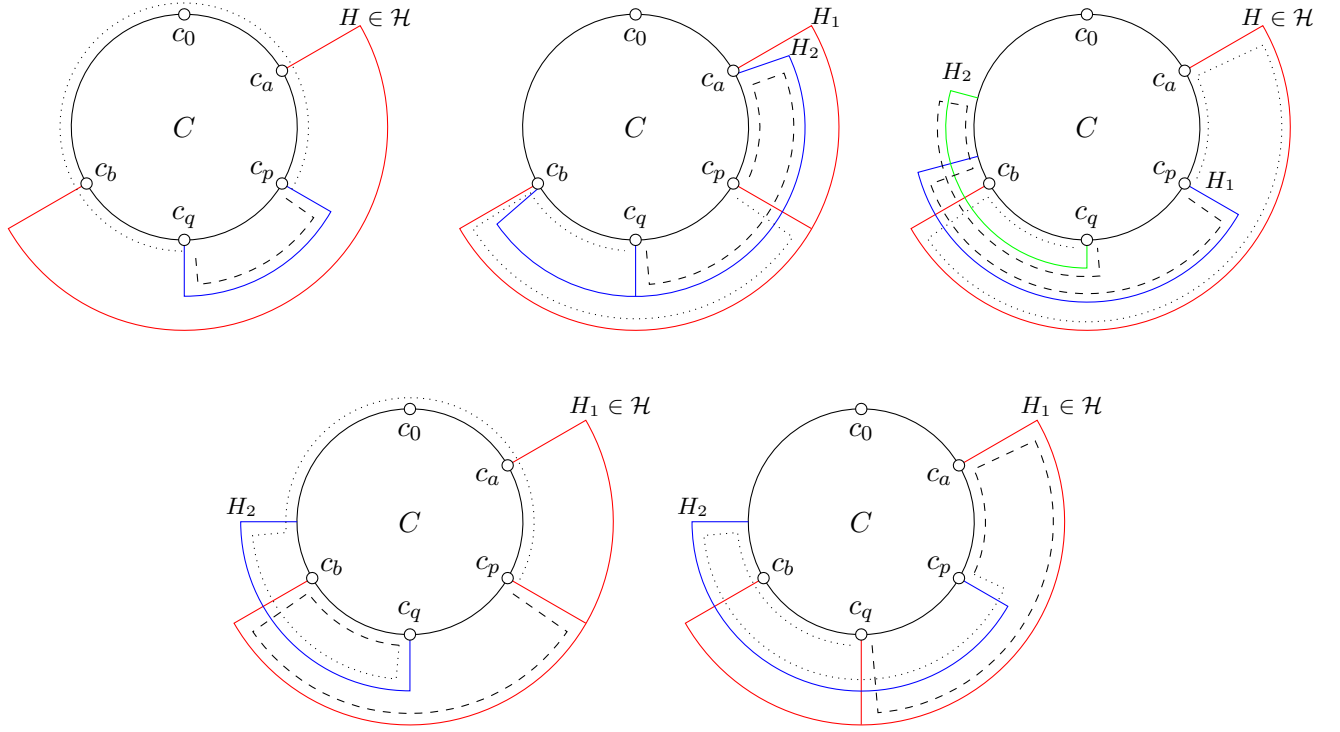

\begin{theorem}\label{thm:earringProof}
Let $H$ be an earring of minimum arc length. Every segment contained
in the arc of $H$ is safe.
\end{theorem}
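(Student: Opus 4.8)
The plan is to verify directly that $G-S$ is connected and has no cut vertex. Write $c_p, c_q$ for the first and last clasps of $H$, so the arc of $H$ is the clockwise path $c_p, c_{p+1}, \dots, c_q$ and has length $q-p$; let the endpoints of $S$ be the consecutive anchors $c_i, c_j$, so $p \le i < j \le q$, and in particular $i \le q-1$ and $j \ge p+1$. Recall that the interior of $S$ is attached to the rest of $G$ only through $c_i$ and $c_j$. Connectivity is easy: deleting $S$ leaves $C$ as a path $P$ running the long way from $c_j$ through $c_q$, the origin, and $c_p$ to $c_i$, and every earring has at least two clasps (a single clasp would be a cut vertex of $G$), all of which are anchors and hence survive, so every earring stays attached to $P$. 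For the cut-vertex part I first reduce to a statement about $c_i$ and $c_j$: the vertices $c_i, c_j$ themselves are not cut vertices, since deleting $c_i$ from $G-S$ is the same as deleting from the connected graph $G-c_i$ the interior of $S$, which is now a pendant path off $c_j$; and for any other vertex $v$, since $G-v$ is connected while $(G-S)-v$ is not, and the interior of $S$ meets $G-v$ only at $c_i, c_j$, its removal can disconnect $G-v$ only if it is the sole bridge between $c_i$ and $c_j$, so $c_i$ and $c_j$ lie in different components of $(G-S)-v$.

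The next step is to localize such a hypothetical $v$. I will use two $c_i$--$c_j$ paths in $G-S$: the path $\Lambda = P$ along $C$, and the path $\Pi$ that follows $C$ from $c_i$ to $c_p$, then a $c_p$--$c_q$ path through the earring $H$, then $C$ from $c_q$ to $c_j$; $\Pi$ is simple because its two $C$-arcs occupy the disjoint index ranges $[p,i]$ and $[j,q]$ while the rest of $\Pi$ lies inside $H$. Since $v$ separates $c_i$ from $c_j$, it lies on both $\Lambda$ and $\Pi$, and the only vertices they share are those of the two $C$-arcs $\{c_p,\dots,c_i\}$ and $\{c_j,\dots,c_q\}$; hence $v = c_m$ with $p \le m \le i-1$ or $j+1 \le m \le q$.

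Finally I derive a contradiction from minimality of the arc of $H$. Take the case $j+1 \le m \le q$ (the other is symmetric). Let $B$ be the component of $(G-S)-c_m$ containing $c_j$; since $c_i \notin B$, inspecting what remains of $C$ after deleting the interior of $S$ and $c_m$ shows that $B$ meets $C$ in exactly $\{c_j, c_{j+1}, \dots, c_{m-1}\}$. Now $c_j$ is an anchor, hence a clasp of some earring $\tilde H$; because $\tilde H$ is connected with internal vertices off $C$ (so distinct from $c_m$), all of $\tilde H$ lies in $B$, and therefore every clasp of $\tilde H$ lies in $\{c_j,\dots,c_{m-1}\}\cup\{c_m\}=\{c_j,\dots,c_m\}$. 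Consequently the arc of $\tilde H$ has length at most $m-j \le q-(p+1) < q-p$, contradicting the choice of $H$. (If $\tilde H$ had a single clasp it would be a cut vertex of $G$, so its arc is well defined; note also that this argument subsumes the degenerate possibility $\tilde H = H$, since then it reads as the arc of $H$ being strictly shorter than itself.) Combining, $G-S$ is connected and 2-connected, so $S$ is safe.

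I expect the last two steps to carry the real work. The delicate points are pinning down $B\cap V(C)$ exactly and verifying that the earring $\tilde H$ hanging off $c_j$ genuinely cannot reach past $c_m$ back to $c_i$ — this is precisely where the ``minimum arc length'' hypothesis is exploited, and it is the source of the several configurations shown in Figure~\ref{fig:earringProof} (which side of $S$ the cut vertex lies on, and how the relevant earrings $H$ and $\tilde H$ interleave). The connectivity argument and the treatment of $v\in\{c_i,c_j\}$ are routine by comparison.
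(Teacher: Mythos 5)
Your proof is correct, and it takes a genuinely different route from the paper's. Both arguments reduce the problem to showing that the two endpoints $c_i,c_j$ of $S$ cannot be separated by a single vertex in $G-S$; you spell this reduction out carefully, whereas the paper asserts without detail that exhibiting two internally disjoint paths between the segment endpoints ``suffices.'' From there the proofs diverge. The paper's is constructive: it introduces the family $\mathcal{H}$ of all earrings sharing $H$'s arc and runs a multi-way case analysis on whether the segment endpoints are clasps of earrings in $\mathcal{H}$, explicitly building two vertex-disjoint paths in each case, with minimality entering only to guarantee that earrings outside $\mathcal{H}$ have a clasp outside the critical arc. You instead argue by contradiction: from two (non-disjoint) paths $\Lambda$ and $\Pi$ you pin a hypothetical cut vertex $c_m$ onto one of the sub-arcs $\{c_p,\dots,c_{i-1}\}$ or $\{c_{j+1},\dots,c_q\}$, and then show that the earring $\tilde{H}$ anchored at the stranded endpoint of $S$ has all its clasps confined to an arc of length at most $m-j < q-p$, contradicting the minimality of $H$. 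This dispenses with the case analysis and never needs the auxiliary family $\mathcal{H}$, which is a real streamlining, at the price of being non-constructive. One point worth making explicit: when you conclude that every clasp of $\tilde{H}$ lies in $\{c_j,\dots,c_m\}$, a clasp adjacent to $\tilde{H}\subseteq B$ but absent from $(G-S)-c_m$ could a priori be $c_m$ or an interior vertex of $S$; the latter must be ruled out by observing that clasps are anchors while segment interiors contain none.
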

\begin{proof}
  Let $\script{H}$ be the set of earrings with arc identical to that
  of $H$. Since they have the same arc, we refer to this as the arc of
  $\script{H}$, or the \emph{critical arc}. Let the first clasp of
  every earring in $\script{H}$ be $c_a$, and the last clasp of each
  earring in $\script{H}$ be $c_b$. Because the earrings in
  $\script{H}$ have arcs of minimum length, any earring $H' \notin
  \script{H}$ has a clasp $c_x$ that is not in the critical arc. (That
  is, $c_x < c_a$ or $c_x > c_b$.)

  We must show that every segment contained in the critical arc is
  safe; recall that a segment $S$ is safe if the graph $G-S$ is
  2-connected.  Given an arbitrary segment $S$ in the critical arc,
  let $c_p$ and $c_q$ ($p < q$) be the anchors that are its
  endpoints. We prove that there are always 2 internally
  vertex-disjoint paths between $c_p$ and $c_q$ in $G - S$; this
  suffices to show 2-connectivity.

  We consider several cases, depending on the earrings that contain
  $c_p$ and $c_q$. Figure~\ref{fig:earringProof} illustrates these
  cases.  If $c_p$ and $c_q$ are contained in the same earring $H'$,
  it is easy to find two vertex-disjoint paths between them in
  $G-S$. The first path is clockwise from $q$ to $p$ in the cycle
  $C$. The second path is entirely contained in the earring $H'$ (an
  earring is connected in $G-C$, so we can always find such a path.)

  Otherwise, $c_p$ and $c_q$ are clasps of distinct earrings. We
  consider three cases: Both $c_p$ and $c_q$ are clasps of earrings in
  $\script{H}$, one is (but not both), or neither is.
  \begin{enumerate}
  \item We first consider that both $c_p$ and $c_q$ are clasps of
    earrings in $\script{H}$. Let $c_p$ be a clasp of $H_1$, and $c_q$
    a clasp of $H_2$. The first path is from $c_q$ to $c_a$ through
    $H_2$, and then clockwise along the critical arc from $c_a$ to
    $c_p$. The second path is from $c_q$ to $c_b$ clockwise along the
    critical path, and then $c_b$ to $c_p$ through $H_1$. It is easy
    to see that these paths are internally vertex-disjoint.

  \item Now, suppose neither $c_p$ nor $c_q$ is a clasp of an earring
    in $\script{H}$. Let $c_p$ be a clasp of $H_1$, and $c_q$ be a
    clasp of $H_2$. The first path we find follows the critical arc
    clockwise from $c_q$ to $c_b$ (the last clasp of the critical
    arc), from $c_b$ to $c_a$ through $H \in \script{H}$, and again
    clockwise through the critical arc from $c_a$ to $c_p$. Internal
    vertices of this path are all in $H$ or on the critical arc. Let
    $c_{p'}$ be a clasp of $H_1$ not on the critical arc, and $c_{q'}$
    be a last clasp of $H_2$ not on the critical arc. The second path
    goes from $c_p$ to $c_{p'}$ through $H_1$, from $p'$ to $q'$
    through the cycle $C$ outside the critical arc, and from $c_{q'}$
    to $c_q$ through $H_2$. Internal vertices of this path are in
    $H_1, H_2$, or in $C$, but not part of the critical arc (since
    each of $c_{p'}$ and $c_{q'}$ are outside the critical arc).
    Therefore, we have 2 vertex-disjoint paths from $c_p$ to $c_q$.

  \item Finally, we consider the case that exactly one of $c_p, c_q$
    is a clasp of an earring in $\script{H}$. Suppose $c_p$ is a clasp
    of $H_1 \in \script{H}$, and $c_q$ is a clasp of $H_2 \notin
    \script{H}$; the other case (where $H_1 \notin \script{H}$ and
    $H_2 \in \script{H}$ is symmetric, and omitted, though
    figure~\ref{fig:earringProof} illustrates the paths.) Let $q'$ be
    the index of a clasp of $H_2$ outside the critical arc. The first
    path is from $c_q$ to $c_b$ through the critical arc, and then
    from $c_b$ to $c_p$ through $H_1$. The second path is from $c_q$
    to $c_{q'}$ through $H_2$, and from $c_{q'}$ to $c_p$ clockwise
    through $C$. Note that the last part of this path enters the
    critical arc at $c_a$, and continues through the arc until $c_p$.
    Internal vertices of the first path that are in $C$ are on the
    critical arc, but have index greater than $q$. Internal vertices
    of the second path that belong to $C$ are either not in the
    critical arc, or have index between $c_a$ and $c_p$. Therefore,
    the two paths are internally vertex-disjoint. \qedhere
  \end{enumerate}
\end{proof}

We now describe our algorithm to find a non-trivial cycle of good
density, proving Theorem~\ref{thm:cycle}:
  \emph{Let $G$ be a $2$-connected graph with edge-costs and terminal weights,
  and at least $2$ terminals. There is a polynomial-time algorithm to
  find a non-trivial cycle $X$ in $G$ such that $\dens{X} \le
  \dens{G}$.}

\begin{proofof}{Theorem~\ref{thm:cycle}}
  Let $G$ be a graph with $\ell$ terminals and density $\rho$; we describe
  a polynomial-time algorithm that either finds a cycle in $G$ of density
  less than $\rho$, or a proper subgraph $G'$ of $G$ that contains all $\ell$
  terminals. In the latter case, we can recurse on $G'$ until we eventually
  find a cycle of density at most $\rho$.

  We first find, in $O(n^3)$ time, a minimum-density cycle $C$ in $G$. By
  Theorem~\ref{thm:cycleExists}, $C$ has density at most $\rho$, because
  the minimum-density \emph{non-trivial} cycle has at most this density.
  If $C$ contains at least 2 terminals, we are done. Otherwise, $C$
  contains exactly one terminal $v$. Since $G$ contains at least 2
  terminals, there must exist at least one earring of $C$.

  Let $v$ be the origin of this cycle $C$, and $H$ an earring of minimum
  arc length. By Theorem~\ref{thm:earringProof}, every segment in the arc
  of $H$ is safe. Let $S$ be such a segment; since $v$ was selected as the
  origin, $v$ is not an internal vertex of $S$. As $v$ is the only terminal
  of $C$, $S$ contains no terminals, and therefore, the graph $G' = G - S$
  is 2-connected, and contains all $\ell$ terminals of $G$.
\end{proofof}

The proof above also shows that if $G$ is minimally 2-connected on its
terminals (that is, $G$ has no 2-connected proper subgraph containing
all its terminals), every cycle of $G$ is non-trivial. (If a cycle
contains 0 or 1 terminals, it has a safe segment containing no
terminals, which can be deleted; this gives a contradiction.)
Therefore, given a graph that \emph{is} minimally 2-connected on its
terminals, finding a minimum-density non-trivial cycle is equivalent
to finding a minimum-density cycle, and so can be solved exactly in
polynomial time.  This suggests a natural algorithm for the problem:
Given a graph that is not minimally 2-connected on its terminals,
delete edges and vertices until the graph is minimally 2-connected on
the terminals, and then find a minimum-density cycle. As shown above,
this gives a cycle of density no more than that of the input graph,
but this may not be the minimum-density cycle of the original
graph. For instance, there exist instances where the minimum-density
cycle uses edges of a safe segment $S$ that might be deleted by this
algorithm.

\section{Pruning 2-connected Graphs of Good Density}
\label{sec:pruning}

In this section, we prove Theorem~\ref{thm:avekv}. We are given a
graph $G$ and $S \subseteq V$, a set of at least $k$ terminals.
Further, every terminal in $G$ has 2 vertex-disjoint paths to the root
$r$ of total cost at most $L$. Let $\ell$ be the number of terminals
in $G$, and $cost(G)$ its total cost; $\rho = \frac{cost(G)}{\ell}$ is
the density of $G$. We describe an algorithm that finds a subgraph $H$
of $G$ that contains at least $k$ terminals, each of which is
2-connected to the root, and of total edge cost $O(\log k) \rho k + 2L$.

We can assume $\ell > (8 \log k)\cdot k$, or the trivial solution of
taking the entire graph $G$ suffices. The main phase of our algorithm
proceeds by maintaining a set of 2-connected subgraphs that we call
\emph{clusters}, and repeatedly finding low-density cycles that merge
clusters of similar weight to form larger clusters.  (The weight of a
cluster $X$, denoted by $w_X$, is (roughly) the number of terminals it
contains.) Clusters are grouped into \emph{tiers} by weight; tier $i$
contains clusters with weight at least $2^i$ and less than
$2^{i+1}$. Initially, each terminal is a separate cluster in tier
0. We say a cluster is \emph{large} if it has weight at least $k$, and
\emph{small} otherwise. The algorithm stops when most terminals are in
large clusters.

We now describe the algorithm {\sc MergeClusters} (see next page). To simplify
notation, let $\alpha$ be the quantity $2 \ceil{\log k} \rho$.  We say
that a cycle is \emph{good} if it has density at most $\alpha$; that
is, good cycles have density at most $O(\log k)$ times the density of
the input graph.

\begin{algo}
\underline{\sc MergeClusters}:\\
For (each $i$ in $\{0, 1, \dots, \left(\ceil{\log_2 k} - 1 \right)\}$) do: \+ \\
    If ($i = 0$): \+ \\
        Every terminal has weight 1 \- \\
    Else: \+ \\
        Mark all vertices as non-terminals \\
        For (each small 2-connected cluster $X$ in tier $i$) do: \+ \\
             Add a (dummy) terminal $v_X$ to $G$ of weight $w_X$ \\
             Add (dummy) edges of cost 0 from $v_X$ to two (arbitrary) distinct vertices of $X$ \- \- \\

    While ($G$ has a non-trivial cycle $C$ of density at most $\alpha$ in $G$): \+ \\
        Let $X_1, X_2, \ldots X_q$ be the small clusters that contain a terminal
           {\bf or an edge} of $C$.\\
        (Note that the terminals in $C$ belong to a subset of $\{X_1, \ldots X_q\}$.)\\
        Form a new cluster $Y$ (of a higher tier) by merging the clusters $X_1, \ldots X_q$\\
        $w_Y \assign \sum_{j=1}^q w_{X_j}$ \\
        If ($i = 0$): \+ \\
            Mark all terminals in $Y$ as non-terminals \- \\
        Else: \+ \\
            Delete all (dummy) terminals in $Y$ and the associated (dummy) edges.
\end{algo}

We briefly remark on some salient features of this algorithm and our
analysis before presenting the details of the proofs.
\begin{enumerate}
\item In iteration $i$, the terminals correspond to tier $i$
  clusters. Clusters are 2-connected subgraphs of $G$, and by using
  cycles to merge clusters, we preserve 2-connectivity as the clusters
  become larger.

\item When a cycle $C$ is used to merge clusters, all small clusters
  that contain an edge of $C$ (regardless of their tier) are merged to
  form the new cluster. Therefore, at any stage of the algorithm, all
  currently small clusters are edge-disjoint.  Large clusters, on the
  other hand, are \emph{frozen}; even if they intersect a good cycle
  $C$, they are not merged with other clusters on $C$. Thus, at any
  time, an edge may be in multiple large clusters and up to one small
  cluster.

\item In iteration $i$ of {\sc MergeClusters}, the density of a cycle
  $C$ is only determined by its cost and the weight of terminals in
  $C$ corresponding to tier $i$ clusters. Though small clusters of
  other (lower or higher) tiers might be merged using $C$, we do
  \emph{not} use their weight to pay for the edges of $C$.

\item The $i$th iteration terminates when no good cycles can be found
  using the remaining tier $i$ clusters. At this point, there may be
  some terminals remaining that correspond to clusters which are not
  merged to form clusters of higher tiers. However, our choice of
  $\alpha$ (which defines the density of good cycles) is such that we
  can bound the number of terminals that are ``left behind'' in this
  fashion. Therefore, when the algorithm terminates, most terminals
  are in large clusters.
\end{enumerate}

By bounding the density of large clusters, we can find a solution to
the rooted \kv problem of bounded density. Because we always use
cycles of low density to merge clusters, an analysis similar to that
of \cite{LauNSS07} and \cite{ChekuriKP08} shows that every large
cluster has density at most $O(\log^2 k) \rho$. We first present this
analysis, though it does not suffice to prove Theorem~\ref{thm:avekv}.
A more careful analysis shows that there is at least one large cluster
of density at most $O(\log k) \rho$; this allows us to prove the
desired theorem.

We now formally prove that {\sc MergeClusters} has the desired
behavior. First, we present a series of claims which, together, show
that when the algorithm terminates, most terminals are in large
clusters, and all clusters are 2-connected.

\begin{remark}\label{rem:cluster}
  Throughout the algorithm, the graph $G$ is always 2-connected. The
  weight of a cluster is at most the number of terminals it contains.
\end{remark}
\begin{proof}
  The only structural changes to $G$ are when new vertices are added
  as terminals; they are added with edges to two distinct vertices of
  $G$. This preserves 2-connectivity, as does deleting these terminals
  with the associated edges.

  To see that the second claim is true, observe that if a terminal
  contributes weight to a cluster, it is contained in that cluster. A
  terminal can be in multiple clusters, but it contributes to the
  weight of exactly one cluster.
\end{proof}

We use the following simple proposition in proofs of 2-connectivity;
the proof is straightforward, and hence omitted.

\begin{proposition}\label{prop:shareEdge}
  Let $H_1=(V_1,E_1)$ and $H_2=(V_2, E_2)$ be $2$-connected subgraphs
  of a graph $G(V,E)$ such that $|V_1 \cap V_2| \ge 2$.  Then the
  graph $H_1 \cup H_2 = (V_1 \cup V_2, E_1 \cup E_2)$ is
  $2$-connected.
\end{proposition}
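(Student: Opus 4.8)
The plan is to invoke the standard characterization that a graph on at least three vertices is $2$-connected if and only if it is connected and has no cut vertex; equivalently, a graph $G$ is $2$-connected iff $|V(G)| \ge 3$ and $G - v$ is connected for every vertex $v \in V(G)$. First I would observe that $H := H_1 \cup H_2$ has at least three vertices, since $H_1$ does (being $2$-connected).

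The main step is to show $H - v$ is connected for every $v \in V_1 \cup V_2$. I would first note that $H_1 - v$ is connected and has at least two vertices: if $v \notin V_1$ this is immediate since $H_1 - v = H_1$, and if $v \in V_1$ it follows from $2$-connectivity of $H_1$. By the same reasoning $H_2 - v$ is connected. Now, since $|V_1 \cap V_2| \ge 2$, the set $(V_1 \cap V_2)\setminus\{v\}$ is nonempty, so there is a vertex $u \neq v$ belonging to both $V_1$ and $V_2$, hence to both $H_1 - v$ and $H_2 - v$. Two connected subgraphs sharing a common vertex have a connected union, so $H - v = (H_1 - v) \cup (H_2 - v)$ is connected. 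Since this holds for every $v$, and $H$ has at least three vertices, $H$ is $2$-connected. (Connectedness of $H$ itself also follows from this, or directly from the fact that $H_1$ and $H_2$ are each connected and share the vertex $u$.)

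There is essentially no real obstacle here; the only point needing a moment's care is the degenerate case $|V_1 \cap V_2| = 2$ in which the deleted vertex $v$ happens to be one of the two shared vertices — but then the other shared vertex plays the role of the common vertex $u$ of $H_1 - v$ and $H_2 - v$, and the argument goes through unchanged.
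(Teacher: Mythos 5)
Your proof is correct and complete. The paper explicitly omits the proof of this proposition (``the proof is straightforward, and hence omitted''), so there is no argument in the paper to compare against; your route via the cut-vertex characterization (a graph on at least three vertices is $2$-connected iff deleting any one vertex leaves it connected), combined with the observation that $H_1 - v$ and $H_2 - v$ are each connected and still share a vertex because $|V_1 \cap V_2| \ge 2$, is the natural and standard argument.
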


\begin{lemma}\label{lem:clusters2conn}
The clusters formed by {\sc MergeClusters} are all $2$-connected.
\end{lemma}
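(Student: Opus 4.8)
The plan is to argue by induction on the formation of clusters, showing that every cluster produced by {\sc MergeClusters} is 2-connected. The base case is clear: in tier $0$ each cluster is a single terminal, and a single vertex is (vacuously) 2-connected in the relevant sense; more usefully, the first merges in iteration $0$ are driven by non-trivial cycles, which are themselves 2-connected. For the inductive step, suppose a new cluster $Y$ is formed when a good non-trivial cycle $C$ is found, and let $X_1, \dots, X_q$ be the small clusters containing a terminal or an edge of $C$; by the induction hypothesis each $X_j$ is 2-connected. We want to conclude $Y = C \cup X_1 \cup \dots \cup X_q$ is 2-connected.

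First I would handle the role of the dummy terminals and dummy edges. In iteration $i \ge 1$, a cycle $C$ may pass through dummy vertices $v_{X_j}$, each of which is attached to $X_j$ by two 0-cost edges to distinct vertices of $X_j$. When we delete $v_{X_j}$ and its two dummy edges, the two attachment vertices lie in $X_j$, and the portion of $C$ that entered and left via $v_{X_j}$ is replaced (conceptually) by the fact that those two attachment points are joined inside the 2-connected graph $X_j$. So after deletion, $C$ together with the $X_j$'s still forms a connected structure; more precisely, I would observe that $C \cup X_1 \cup \dots \cup X_q$ restricted to the real graph, i.e.\ with each $v_{X_j}$ replaced by a path through $X_j$ between its two clasps, is exactly $Y$, and it suffices to show this real graph is 2-connected.

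The core of the argument is then repeated application of Proposition~\ref{prop:shareEdge}: if two 2-connected subgraphs share at least two vertices, their union is 2-connected. I would build $Y$ up incrementally. Start with the cycle $C$ (2-connected). Each small cluster $X_j$ that contributes an \emph{edge} of $C$ shares that edge --- hence at least two vertices --- with $C$, so $C \cup X_j$ is 2-connected; iterating over all such $X_j$ keeps the running union 2-connected, since at each step the new $X_j$ shares an edge (two vertices) with $C$, which is inside the current union. For a cluster $X_j$ that contributes only a \emph{terminal} of $C$ (a real terminal in iteration $0$, or a dummy terminal $v_{X_j}$ in iteration $i \ge 1$): in iteration $0$ a contributed terminal is a single vertex of $C$, and since that terminal \emph{is} the cluster $X_j$ (a tier-$0$ singleton) the union is trivially handled; in iteration $i \ge 1$, the dummy $v_{X_j}$ on $C$ has two 0-cost edges to distinct vertices of $X_j$, so $C$ and $X_j$ effectively share those two distinct vertices of $X_j$ (after replacing $v_{X_j}$ by the path through $X_j$), and Proposition~\ref{prop:shareEdge} again applies. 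Thus $Y$ is 2-connected.

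The main obstacle I anticipate is the bookkeeping around the dummy terminals: making precise the claim that deleting $v_{X_j}$ and ``routing through $X_j$ instead'' preserves 2-connectivity of the union, and ensuring the two dummy edges indeed go to \emph{distinct} vertices of $X_j$ (which the algorithm guarantees) so that $C$ and $X_j$ share two vertices rather than one. A secondary subtlety is that clusters of \emph{lower or higher} tiers can also be merged in via shared edges of $C$ (per the algorithm's definition of $X_1, \dots, X_q$); but these are all small clusters, hence 2-connected by induction, and the shared-edge argument via Proposition~\ref{prop:shareEdge} applies uniformly regardless of tier, so this does not require a separate case. I would also note that since all currently-small clusters are edge-disjoint, the only way $X_j$ meets $C$ is through the terminal/edge contributions enumerated, so no further intersections need to be tracked.
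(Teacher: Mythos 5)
Your proposal is correct and follows essentially the same route as the paper's proof: induct on cluster formation, observe that the merging cycle $C$ (which is 2-connected) shares at least two vertices with each constituent small cluster (either the two endpoints of a shared edge, or the two distinct attachment vertices of a dummy terminal on $C$), and apply Proposition~\ref{prop:shareEdge} iteratively, with the dummy vertices/edges removable at the end since they form ears. The only cosmetic difference is that you conceptualize ``replacing'' each dummy by a path inside the corresponding cluster, whereas the paper proves 2-connectivity of the union including dummies and then deletes them as ears; the underlying argument is identical.
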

\begin{proof}
  Let $Y$ be a cluster formed by using a cycle $C$ to merge clusters
  $X_1, X_2, \ldots X_q$. The edges of the cycle $C$ form a
  2-connected subgraph of $G$, and we assume that each $X_j$ is
  2-connected by induction.  Further, $C$ contains at least 2 vertices
  of each $X_j$\footnote{A cluster $X_j$ may be a singleton vertex
    (for instance, if we are in tier 0), but such a vertex does not
    affect 2-connectivity.}, so we can use induction and
  Proposition~\ref{prop:shareEdge} above: We assume $C
  \cup\{X_l\}_{l=1}^j$ is 2-connected by induction, and $C$ contains 2
  vertices of $X_{j+1}$, so $C \cup\{X_l\}_{l=1}^{j+1}$ is
  2-connected.

  Note that we have shown $Y = C \cup \{X_j\}_{j=1}^q$ is 2-connected,
  but $C$ (and hence $Y$) might contain dummy terminals and the
  corresponding dummy edges. However, each such terminal with the 2
  associated edges is a ear of $Y$; deleting them leaves $Y$
  2-connected.
\end{proof}

\begin{lemma}\label{lem:fewLeftBehind}
  The total weight of small clusters in tier $i$ that are not merged to form
  clusters of higher tiers is at most $\frac{\ell}{2 \ceil{\log k}}$.
\end{lemma}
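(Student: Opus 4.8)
I want to bound the total weight of tier-$i$ clusters that survive the $i$th iteration of {\sc MergeClusters} (i.e., are never merged into a higher tier). The key point is that the $i$th iteration stops precisely when the graph $G$ (with terminals set to be exactly the tier-$i$ clusters of weight in $[2^i, 2^{i+1})$, after marking everything else as a non-terminal) has no non-trivial cycle of density at most $\alpha = 2\lceil\log k\rceil\rho$. So at termination, the graph $G_i$ consisting of $G$ restricted to the terminals-that-are-leftover-tier-$i$-clusters has the property that *every* non-trivial cycle in it has density more than $\alpha$.

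First I would invoke Theorem~\ref{thm:cycle} (the average-density cycle theorem) contrapositively: if a $2$-connected graph with at least $2$ terminals has all non-trivial cycles of density exceeding $\alpha$, then the graph itself has density exceeding $\alpha$. Applying this to the subgraph $G_i$ induced on the leftover tier-$i$ dummy terminals — which is $2$-connected by Remark~\ref{rem:cluster} — gives $\mathrm{cost}(G_i)/W_i > \alpha$ where $W_i$ is the total weight of these leftover clusters, provided $W_i$ corresponds to at least $2$ terminals. Since $\mathrm{cost}(G_i) \le \mathrm{cost}(G) = \rho\ell$, this yields $W_i < \rho\ell/\alpha = \rho\ell/(2\lceil\log k\rceil\rho) = \ell/(2\lceil\log k\rceil)$, which is exactly the claimed bound.

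I would organize it as: (1) fix $i$ and consider the moment the while-loop for tier $i$ exits; (2) let $Z$ be the set of tier-$i$ clusters present at that moment that are small and never subsequently merged, with total weight $W_i$; the dummy terminals for these clusters, together with their dummy edges, sit inside $G$, which is $2$-connected; restrict attention to the $2$-connected graph $G'$ obtained from the current $G$ where the only terminals are these dummy terminals; (3) by the stopping condition, $G'$ has no non-trivial cycle of density $\le\alpha$; (4) if $W_i$ were large enough that $G'$ has $\ge 2$ terminals, Theorem~\ref{thm:cycle} forces $\dens{G'} > \alpha$, hence $W_i < \mathrm{cost}(G')/\alpha \le \rho\ell/\alpha = \ell/(2\lceil\log k\rceil)$; (5) handle the degenerate case where there are $0$ or $1$ leftover tier-$i$ dummy terminals separately — then $W_i < 2^{i+1}$, and one checks $2^{i+1} \le \ell/(2\lceil\log k\rceil)$ using the standing assumption $\ell > (8\log k)\cdot k \ge (8\log k)2^i$ (since $i \le \lceil\log k\rceil - 1$ means $2^i \le k$), so the bound holds trivially in that case too.

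**The main obstacle.** The subtle point is making precise which edges and which graph the stopping condition refers to, since during iteration $i$ cycles may also pull in small clusters of *other* tiers, and dummy terminals/edges of tier-$i$ clusters are being added and deleted. I need to argue that at the exact moment the while-loop exits, the relevant graph — current $G$ with dummy terminals exactly for the still-unmerged small tier-$i$ clusters — genuinely has no good non-trivial cycle, and that this graph is still $2$-connected (it is, by Remark~\ref{rem:cluster}, since we only added/removed ears). Once that bookkeeping is nailed down, Theorem~\ref{thm:cycle} does all the work; the arithmetic comparing $\rho\ell/\alpha$ to $\ell/(2\lceil\log k\rceil)$ is immediate from the definition $\alpha = 2\lceil\log k\rceil\rho$, and the cost bound $\mathrm{cost}(G') \le \rho\ell$ holds because dummy edges have cost $0$ and $G$'s real edges have total cost $\rho\ell$.
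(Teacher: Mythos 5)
Your proof is correct and follows essentially the same argument as the paper: appeal to the contrapositive of Theorem~\ref{thm:cycleExists} on the graph with the leftover tier-$i$ terminals, using the stopping condition of the while loop and the cost bound $\mathrm{cost}(G) = \rho\ell$. The only difference is that you explicitly handle the degenerate case of fewer than two leftover terminals (which the paper leaves implicit), and you spell out the $2$-connectivity bookkeeping; both are sound refinements of the paper's one-paragraph contradiction argument.
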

\begin{proof}
  Assume this were not true; this means that {\sc MergeClusters} could
  find no more cycles of density at most $\alpha$ using the remaining
  small tier $i$ clusters.  But the total cost of all the edges is at
  most $cost(G)$, and the sum of terminal weights is at least
  $\frac{\ell}{2 \ceil{\log k}}$; this implies that the density of the
  graph (using the remaining terminals) is at most $2 \ceil{\log k}
  \cdot \frac{cost(G)}{\ell} = \alpha$. But by
  Theorem~\ref{thm:cycleExists}, the graph must then contain a good
  non-trivial cycle, and so the while loop would not have terminated.
\end{proof}

\begin{corollary}\label{cor:weightLargeClusters}
  When the algorithm {\sc MergeClusters} terminates, the total weight of large
  clusters is at least $\ell/2 > (4 \log k) \cdot k$.
\end{corollary}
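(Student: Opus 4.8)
The plan is to prove the slightly reformulated statement that the total weight of \emph{small} clusters at termination is at most $\ell/2$; the corollary then follows because the total weight summed over all clusters is invariant and equal to $\ell$ throughout the execution of {\sc MergeClusters}. Indeed, by Remark~\ref{rem:cluster} and its proof, every (original) terminal contributes its weight to exactly one cluster at all times, a merge merely reassigns the contributing terminals of $X_1,\dots,X_q$ to the new cluster $Y$, and initially the $\ell$ terminals form $\ell$ singleton clusters of weight $1$; so $\sum_X w_X = \ell$ always. The second inequality $\ell/2 > (4\log k)\cdot k$ is then immediate from the standing assumption $\ell > (8\log k)\cdot k$.

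First I would localize small clusters by tier: a small cluster has weight strictly less than $k \le 2^{\lceil\log k\rceil}$, while a cluster in tier $i$ has weight at least $2^i$, so every small cluster lies in some tier $i$ with $0 \le i \le \lceil\log k\rceil - 1$. Thus it suffices to show that for each such $i$, the total weight of tier-$i$ small clusters present when the algorithm halts is at most $\frac{\ell}{2\lceil\log k\rceil}$; summing over the $\lceil\log k\rceil$ tiers then gives the bound $\ell/2$.

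To get the per-tier bound I would invoke Lemma~\ref{lem:fewLeftBehind}, which already asserts exactly this for the tier-$i$ small clusters that are not merged during iteration $i$. The one point that needs care --- and which I expect to be the only real obstacle --- is showing that the collection of tier-$i$ small clusters cannot grow after iteration $i$ ends, so that the set present at termination is contained in (and hence has weight at most that of) the set present at the end of iteration $i$. Here I would argue that any merge carried out in iteration $i'$ fuses at least two \emph{distinct} clusters, each of weight at least $2^{i'}$: the non-trivial cycle $C$ driving the merge contains at least two terminals, and in iteration $i'$ those terminals correspond to at least two distinct tier-$i'$ clusters (distinct clusters carry distinct dummy terminals, and in tier $0$ a cluster is a single original terminal). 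Hence the new cluster has weight at least $2^{i'+1}$ and lies in a tier strictly greater than $i'$. Consequently no tier-$i$ cluster is ever created during any iteration $i' \ge i$, while existing tier-$i$ clusters can only be consumed (never recreated) when they share an edge with a good cycle in a later iteration; so the set of tier-$i$ small clusters only shrinks after iteration $i$ completes.

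Combining these ingredients, the total weight of small clusters at termination is at most $\lceil\log k\rceil \cdot \frac{\ell}{2\lceil\log k\rceil} = \ell/2$, so the total weight of large clusters is at least $\ell - \ell/2 = \ell/2 > (4\log k)\cdot k$.
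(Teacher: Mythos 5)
Your proof is correct and follows the paper's route exactly: per tier, Lemma~\ref{lem:fewLeftBehind} bounds the weight of small clusters left behind, and summing over the $\lceil\log k\rceil$ tiers gives $\ell/2$. The detail you add --- that a merge in iteration $i'$ fuses at least two tier-$i'$ clusters, so any new cluster lands in tier at least $i'+1$ and hence no tier-$i$ cluster is ever created once iteration $i$ has begun --- is left implicit in the paper's proof but is indeed what makes the end-of-iteration-$i$ bound carry through to termination.
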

\begin{proof}
  Each terminal not in a large cluster contributes to the weight of a
  cluster that was not merged with others to form a cluster of a
  higher tier. The previous lemma shows that the total weight of such
  clusters in any tier is at most $\frac{\ell} {2\ceil{\log k}}$;
  since there are $\ceil{\log k}$ tiers, the total number of terminals
  not in large clusters is less than $\ceil{\log k} \cdot
  \frac{\ell}{2 \ceil{\log k}} = \ell/2$.
\end{proof}

So far, we have shown that most terminals reach large clusters, all of
which are 2-connected, but we have not argued about the density of
these clusters. The next lemma says that if we can find a large
cluster of good density, we can find a solution to the \kv problem of
good density.

\begin{lemma}\label{lem:segment}
  Let $Y$ be a large cluster formed by {\sc MergeClusters}. If $Y$ has
  density at most $\delta$, we can find a graph $Y'$ with at least $k$
  terminals, each of which is $2$-connected to $r$, of total cost at
  most $2 \delta k + 2 L$.
\end{lemma}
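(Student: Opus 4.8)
The plan is to build $Y'$ in two stages. First I \emph{prune} $Y$ to a subgraph $Z\subseteq Y$ — either a $2$-connected graph or a path — that still carries terminal weight at least $k$ but whose cost is at most $2\delta k$. Then I \emph{attach} $Z$ to the root $r$ at an additional cost of at most $2L$, in such a way that every terminal of $Z$ becomes $2$-connected to $r$. Setting $Y'$ equal to $Z$ together with this attachment then gives a graph with at least $k$ terminals, each $2$-connected to $r$, of total cost at most $2\delta k+2L$, which is the statement.

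For the pruning stage I would apply Theorem~\ref{thm:cycle} to $Y$ to obtain a non-trivial cycle $C\subseteq Y$ of density at most $\dens{Y}\le\delta$. If $C$ already carries terminal weight at least $k$, then — since individual terminal weights are small (every cluster produced by {\sc MergeClusters} is small, hence of weight less than $k$; at the bottom level every terminal has weight exactly $1$) — $C$ has an arc of weight in $[k,2k)$, and averaging $cost(C)$ over its roughly $w_C/k$ essentially disjoint arcs of weight at least $k$ shows one such arc $A$ has $cost(A)\le 2\delta k$; take $Z=A$. If $C$ carries weight less than $k$, a single cycle does not suffice, and I would instead grow $Z$ by repeatedly finding low-density non-trivial cycles and merging into the current $2$-connected subgraph the small clusters that share at least two vertices with the newly found cycle — this is exactly the cycle-merging of {\sc MergeClusters}, and Proposition~\ref{prop:shareEdge} guarantees $2$-connectivity is preserved — stopping the first time the accumulated weight lands in $[k,2k)$; the recursive description of $Y$ as a cycle together with the smaller clusters it merged is what one uses to organize this. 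I expect the cost bound $cost(Z)\le 2\delta k$ to be the main obstacle: bounding $cost(Z)$ merely by $cost(Y)$ is useless, since $w_Y$ (and hence $cost(Y)$) may be far larger than $k$, so one must instead charge the cost of each cycle and sub-cluster used against $\dens{Y}\le\delta$ and against the fact that every cycle used is ``good'' (density at most $\alpha$), while simultaneously controlling the weight overshoot at the final merge.

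For the attachment stage, pick two distinct terminals $x,y\in Z$ (they exist since $Z$ carries weight at least $k$ and no single terminal has weight $\ge k$). By the preprocessing hypothesis $x$ has two internally vertex-disjoint paths to $r$ of total cost at most $L$, which form a cycle $O_x$ through $x$ and $r$, and $y$ has a path $Q$ to $r$ of cost at most $L$; truncate $O_x$ and $Q$ to where they last meet $Z$, so that the added portions are internally disjoint from $Z$. I claim that $B:=Z\cup O_x\cup Q$ has every terminal of $Z$ two-connected to $r$ at extra cost at most $2L$. The core case is that the truncated pieces meet $Z$ at vertices distinct from each other and from $x$ (resp.\ $y$): then $B$ is $2$-connected — for a $2$-connected $Z$, $O_x$ contributes an ear between two distinct vertices of $Z$ and $Q$ an ear from $y$ to $r$; for a path $Z$, deleting any vertex leaves one side attached to $O_x$ via $x$ and the other attached to $r$ via $Q$ — and since $r\in B$, every terminal of $Z$ is $2$-connected to $r$. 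The degenerate cases (both of $x$'s paths leaving $Z$ only at $x$, and similarly for $y$) are handled exactly as in the cycle-merging arguments: take one path from $x$ and one from $y$, let $z$ be their first common vertex, observe that the $x$-to-$z$ portion of the first followed by the $z$-to-$y$ portion of the second is an ear of $Z$ joining $x$ and $y$, and add this ear together with the cycle $O_x$; since $O_x$ and $Z$-plus-this-ear share the sub-path from $x$ to $z$, hence at least two vertices, Proposition~\ref{prop:shareEdge} makes their union $2$-connected with $r$ inside it. In every case the edges added lie within the pre-computed paths from $x$ and $y$ and cost at most $2L$ in total, completing the construction of $Y'$.
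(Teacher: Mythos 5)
Your proposal correctly identifies the two phases (prune to weight roughly $k$ and cost $O(\delta k)$, then attach to $r$ for $2L$), but the pruning step has a genuine gap that you yourself flag, and the paper's proof fills it in a way you do not reproduce. The paper does \emph{not} begin by applying Theorem~\ref{thm:cycle} to $Y$. Instead it uses the explicit combinatorial structure of $Y$ that \textsc{MergeClusters} hands you for free: $Y$ is the union of a single cycle $C$ and the small clusters $X_1,\dots,X_q$ (in cyclic order around $C$) that $C$ merged, and \emph{every} $X_j$ was small at the time, i.e.\ $w_{X_j}<k$. This granularity is exactly what the averaging argument needs: one partitions the cyclic arrangement $X_1,\dots,X_q$ (together with the $C$-edges between consecutive $X_j$'s) into consecutive groups each of weight in $[k,2k)$, of which there are at least $w_Y/(2k)$, and since the total cost of $Y$ is at most $\delta\,w_Y$ some group has cost at most $2\delta k$. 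Your route — take a cycle from Theorem~\ref{thm:cycle}, and if it is too light, ``grow'' it by re-running the cycle-merging — cannot deliver the bound: the density guarantee of Theorem~\ref{thm:cycle} controls the \emph{ratio} of the cycle you find, not its weight (it might contain only two terminals), and the fallback of re-finding good cycles inside $Y$ produces cycles of density only $\alpha=O(\log k)\rho$, while nothing bounds how many such cycles you must accumulate before reaching weight $k$. You acknowledge the obstacle, but the fix is not a more careful charging scheme over re-discovered cycles; it is to use the recorded decomposition of $Y$ and the fact that all its pieces have weight below $k$.

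Your attachment step is also organized differently from the paper's and is more delicate than it needs to be. Once the pruned segment $Y'$ is fixed, the paper picks terminals $v\in X_a$ and $w\in X_b$ in the \emph{first} and \emph{last} cluster of the chosen consecutive segment, and attaches \emph{both} $v$ and $w$ to $r$ with two vertex-disjoint paths each (total $\le 2L$, using the preprocessing bound $L$ per terminal). Two-connectivity of the result is then argued directly: for any $z\in Y'$ one exhibits two internally vertex-disjoint paths from $z$ to $v$ and to $w$ inside $Y'$ (go ``left'' along the arc of $C$ to $X_a$ and ``right'' to $X_b$, using the $2$-connectivity of the cluster containing $z$ to split into $z_1,z_2$), and since $v$ and $w$ are each $2$-connected to $r$, no single vertex can separate $z$ from $r$. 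Your version, which gives $x$ a full pair of disjoint paths but $y$ only a single path $Q$ and then performs truncations and a case analysis on where the paths re-enter $Z$, is not clearly correct in the degenerate cases you gesture at, and in any event it is doing work that the simpler symmetric attachment avoids. I'd recommend rewriting the proof around the cyclic decomposition of $Y$ into its small constituent clusters; everything else then falls out cleanly.
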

\begin{proof}
  Let $X_1, X_2, \ldots X_q$ be the clusters merged to form $Y$ in
  order around the cycle $C$ that merged them; each $X_j$ was a small
  cluster, of weight at most $k$. A simple averaging argument shows
  that there is a consecutive segment of $X_j$s with total weight
  between $k$ and $2k$, such that the cost of the edges of $C$
  connecting these clusters, together with the costs of the clusters
  themselves, is at most $2 \delta k$. Let $X_a$ be the ``first''
  cluster of this segment, and $X_b$ the ``last''. Let $v$ and $w$ be
  arbitrary terminals of $X_a$ and $X_b$ respectively. Connect each of
  $v$ and $w$ to the root $r$ using 2 vertex-disjoint paths; the cost
  of this step is at most $2L$. (We assumed that every terminal could
  be 2-connected to $r$ using disjoint paths of cost at most $L$.)
  The graph $Y'$ thus constructed has at least $k$ terminals, and
  total cost at most $2 \delta k + 2L$.  

  We show that every vertex $z$ of $Y'$ is 2-connected to $r$; this
  completes our proof. Let $z$ be an arbitrary vertex of $Y'$; suppose
  there is a cut-vertex $x$ which, when deleted, separates $z$ from
  $r$. Both $v$ and $w$ are 2-connected to $r$, and therefore neither
  is in the same component as $z$ in $Y' - x$. However, we describe 2
  vertex-disjoint paths $P_v$ and $P_w$ in $Y'$ from $z$ to $v$ and
  $w$ respectively; deleting $x$ cannot separate $z$ from both $v$ and
  $w$, which gives a contradiction. The paths $P_v$ and $P_w$ are easy
  to find; let $X_j$ be the cluster containing $z$. The cycle $C$
  contains a path from vertex $z_1 \in X_j$ to $v' \in X_a$, and
  another (vertex-disjoint) path from $z_2 \in X_j$ to $w' \in X_b$.
  Concatenating these paths with paths from $v'$ to $v$ in $X_a$ and
  $w'$ to $w$ in $X_b$ gives us vertex-disjoint paths $P_1$ from $z_1$
  to $v$ and $P_2$ from $z_2$ to $w$. Since $X_j$ is 2-connected, we
  can find vertex-disjoint paths from $z$ to $z_1$ and $z_2$, which
  gives us the desired paths $P_v$ and $P_w$.\footnote{The vertex $z$
    may not be in any cluster $X_j$. In this case, $P_v$ is formed by
    using edges of $C$ from $z$ to $v' \in X_a$, and then a path from
    $v'$ to $v$; $P_w$ is formed similarly.}
\end{proof}

\bigskip
We now present the two analyses of density referred to earlier. The
key difference between the weaker and tighter analysis is in the way
we bound edge costs. In the former, each large cluster pays for its
edges separately, using the fact that all cycles used have density at
most $\alpha = O(\log k) \rho$. In the latter, we crucially use the fact
that small clusters which share edges are merged. Roughly speaking,
because small clusters are edge-disjoint, the average density of small
clusters must be comparable to the density of the input graph
$G$. Once an edge is in a large cluster, we can no longer use the
edge-disjointness argument. We must pay for these edges separately,
but we can bound this cost.

First, the following lemma allows us to show that every large cluster
has density at most $O(\log^2 k) \rho$.

\begin{lemma} \label{lem:tierCost}
  For any cluster $Y$ formed by {\sc MergeClusters} during iteration $i$,
  the total cost of edges in $Y$ is at most $(i+1)\cdot \alpha w_Y$.
\end{lemma}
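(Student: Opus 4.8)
The plan is to prove this by induction on the tier $i$. For the base case $i = 0$, clusters are formed by merging singleton terminals (each of weight $1$) along a single good cycle $C$; since $C$ has density at most $\alpha$ and the weight of $Y$ equals the number of terminals on $C$ (which are exactly the tier-$0$ terminals participating), the cost of the edges of $C$ — which are the only edges of $Y$ — is at most $\alpha w_Y = (0+1)\alpha w_Y$. The only subtlety is that $C$ may also pass through other small clusters via shared edges, but in tier $0$ all clusters are singletons, so there is nothing extra to account for.

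For the inductive step, suppose $Y$ is formed during iteration $i \ge 1$ by using a good cycle $C$ to merge small clusters $X_1, \dots, X_q$. The edges of $Y$ decompose into the edges of $C$ itself (excluding the dummy edges, which have cost $0$ and are deleted) together with the edges internal to each $X_j$. By the inductive hypothesis, each $X_j$ was formed in some iteration $i_j \le i - 1$ (here I would note that a small cluster merged in iteration $i$ must have been completed in an earlier iteration, so $i_j \le i-1$), hence $\mathrm{cost}(X_j) \le i_j \cdot \alpha\, w_{X_j} \le i\cdot\alpha\, w_{X_j}$. Wait — I need to be careful: the inductive bound gives $(i_j+1)\alpha w_{X_j} \le i\,\alpha w_{X_j}$ since $i_j + 1 \le i$. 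Summing over $j$, the total cost of the internal edges of the $X_j$'s is at most $i\,\alpha \sum_j w_{X_j} = i\,\alpha\, w_Y$. It remains to bound $\mathrm{cost}(C)$: since $C$ is a good cycle in iteration $i$, its density (cost divided by the weight of the tier-$i$ terminals it contains) is at most $\alpha$; and the tier-$i$ terminals on $C$ are exactly dummy terminals $v_{X_j}$ for those $X_j$ that are tier-$i$ clusters, whose weights sum to at most $\sum_j w_{X_j} = w_Y$. Hence $\mathrm{cost}(C) \le \alpha\, w_Y$. Adding, $\mathrm{cost}(Y) \le i\,\alpha w_Y + \alpha w_Y = (i+1)\alpha w_Y$, as desired.

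The main obstacle I anticipate is the bookkeeping around \emph{which} clusters on $C$ contribute weight versus merely edges. The algorithm merges \emph{all} small clusters containing an edge of $C$ regardless of tier, but charges $C$'s cost only against the tier-$i$ terminal weight; I must confirm that the weight $w_Y = \sum_j w_{X_j}$ still dominates the weight of the tier-$i$ terminals on $C$ (it does, since those terminals correspond to a subset of the $X_j$), and that no edge of $Y$ is double-counted — each currently-small cluster is edge-disjoint from the others by the second salient feature of the algorithm, so the decomposition of $Y$'s edges into $C$ plus the $X_j$'s internal edges is genuinely a partition. A secondary point needing care is that the dummy terminals/edges introduced for tier-$i$ clusters are deleted upon merging and have zero cost, so they do not affect the cost accounting; this was already observed in Lemma~\ref{lem:clusters2conn}.
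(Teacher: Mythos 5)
Your inductive step rests on the claim that each merged cluster $X_j$ ``must have been completed in an earlier iteration, so $i_j \le i-1$.'' This is false, and the paper's own remarks (just after introducing final-stage clusters) give a counterexample: during iteration $i$, the while loop may run many times, so a cluster $X$ may be formed by one good cycle in iteration $i$ and then, in a later pass of the same while loop, be merged again because a subsequent good cycle contains an edge of $X$. Such an $X$ was formed during iteration $i$, not earlier, so the bound $\mathrm{cost}(X_j)\le (i_j+1)\alpha w_{X_j}\le i\alpha w_{X_j}$ does not apply to it, and $\alpha w_Y$ is not enough slack to cover the deficit. The same defect infects your base case $i=0$: after the first pass of the while loop you have non-singleton tier-$\ge 1$ clusters, which can be swept up by later cycles in iteration $0$, so it is not true that $Y$'s edges are just a single cycle $C$.

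The paper avoids this by inducting on the \emph{number of vertices in the cluster} rather than on the iteration index, and by splitting the merged clusters $\mathcal{S}$ into $\mathcal{S}_1$ (those of tier exactly $i$) and $\mathcal{S}_2$ (the rest). Tier-$i$ clusters necessarily predate iteration $i$ (anything produced during iteration $i$ has tier strictly greater than $i$), so each $X\in\mathcal{S}_1$ gets the bound $i\,\alpha\, w_X$; these are also precisely the clusters whose weight pays for the cycle $C$, so $\mathrm{cost}(C)\le\alpha\sum_{X\in\mathcal{S}_1}w_X$, giving $(i+1)\alpha$ per unit weight for the $\mathcal{S}_1$ side. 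Clusters in $\mathcal{S}_2$ may have been formed during iteration $i$, but they are strictly smaller than $Y$, so by the size-induction they already satisfy $\mathrm{cost}(X)\le(i+1)\alpha\,w_X$ --- and since they contribute no weight toward paying for $C$, no further slack is needed for them. Summing gives $(i+1)\alpha\,w_Y$. If you recast your argument with induction on cluster size and this tier-$i$ versus non-tier-$i$ split, your remaining bookkeeping (edge-disjointness of small clusters, zero-cost dummy edges, tier-$i$ terminals being a subset of the $X_j$'s) is sound.
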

\begin{proof}
  We prove this lemma by induction on the number of vertices in a
  cluster.  Let $\script{S}$ be the set of clusters merged using a
  cycle $C$ to form $Y$.  Let $\script{S}_1$ be the set of clusters in
  $\script{S}$ of tier $i$, and $\script{S}_2$ be $\script{S} -
  \script{S}_1$. ($\script{S}_2$ contains clusters of tiers less or
  greater than $i$ that contained an edge of $C$.)

  The cost of edges in $Y$ is at most the sum of: the cost of $C$, the
  cost of $\script{S}_1$, and the cost of $\script{S}_2$. Since all
  clusters in $\script{S}_2$ have been formed during iteration $i$ or
  earlier, and are smaller than $Y$, we can use induction to show that
  the cost of edges in $\script{S}_2$ is at most $(i+1) \alpha \sum_{X
    \in \script{S}_2}w_X$. All clusters in $\script{S}_1$ are of tier
  $i$, and so must have been formed before iteration $i$ (any cluster
  formed during iteration $i$ is of a strictly greater tier), so we
  use induction to bound the cost of edges in $\script{S}_1$ by $i
  \alpha \sum_{X \in \script{S}_1} w_X$.

  Finally, because $C$ was a good-density cycle, and only clusters of
  tier $i$ contribute to calculating the density of $C$, the cost of
  $C$ is at most $\alpha \sum_{X \in \script{S}_1} w_X$. Therefore,
  the total cost of edges in $Y$ is at most $(i+1) \alpha \sum_{X \in
    \script{S}} w_X = (i+1) \alpha w_Y$.
\end{proof}

\bigskip
Let $Y$ be an arbitrary large cluster; since we have only $\ceil{\log k}$
tiers, the previous lemma implies that the cost of $Y$ is at most
$\ceil{\log k} \cdot \alpha w_Y = O(\log^2 k) \rho w_Y$. That is, the
density of $Y$ is at most $O(\log^2 k) \rho$, and we can use this fact
together with Lemma~\ref{lem:segment} to find a solution to the rooted \kv
problem of cost at most $O(\log^2 k) \rho k + 2L$. This completes the
`weaker' analysis, but this does not suffice to prove
Theorem~\ref{thm:avekv}; to prove the theorem, we would need to use a large
cluster $Y$ of density $O(\log k) \rho$, instead of $O(\log^2 k) \rho$.

For the purpose of the more careful analysis, implicitly construct a
forest $\script{F}$ on the clusters formed by {\sc
  MergeClusters}. Initially, the vertex set of $\script{F}$ is just
$S$, the set of terminals, and $\script{F}$ has no edges. Every time a
cluster $Y$ is formed by merging $X_1, X_2, \ldots X_q$ , we add a
corresponding vertex $Y$ to the forest $\script{F}$, and add edges
from $Y$ to each of $X_1, \ldots X_q$; $Y$ is the parent of $X_1,
\ldots X_q$. We also associate a cost with each vertex in
$\script{F}$; the cost of the vertex $Y$ is the cost of the cycle used
to form $Y$ from $X_1, \ldots X_q$. We thus build up trees as the
algorithm proceeds; the root of any tree corresponds to a cluster that
has not yet become part of a bigger cluster. The leaves of the trees
correspond to vertices of $G$; they all have cost 0. Also, any large
cluster $Y$ formed by the algorithm is at the root of its tree; we
refer to this tree as $T_Y$.

For each large cluster $Y$ after {\sc MergeClusters} terminates, say
that $Y$ is of type $i$ if $Y$ was formed during iteration $i$ of
MergeClusters. We now define the \emph{final-stage} clusters of $Y$:
They are the clusters formed during iteration $i$ that became part of
$Y$. (We include $Y$ itself in the list of final-stage clusters; even
though $Y$ was formed in iteration $i$ of {\sc MergeClusters}, it may
contain other final-stage clusters. For instance, during iteration
$i$, we may merge several tier $i$ clusters to form a cluster $X$ of
tier $j > i$. Then, if we find a good-density cycle $C$ that contains
an edge of $X$, $X$ will merge with the other clusters of $C$.)  The
\emph{penultimate} clusters of $Y$ are those clusters that exist just
before the beginning of iteration $i$ and become a part of
$Y$. Equivalently, the penultimate clusters are those formed before
iteration $i$ that are the immediate children in $T_Y$ of final-stage
clusters. Figure 1 illustrates the definitions of final-stage and
penultimate clusters. Such a tree could be formed if, in iteration
$i-1$, 4 clusters of this tier merged to form $D$, a cluster of tier
$i+1$. Subsequently, in iteration $i$, clusters $H$ and $J$ merge to
form $F$. We next find a good cycle containing $E$ and $G$; $F$
contains an edge of this cycle, so these three clusters are merged to
form $B$. Note that the cost of this cycle is paid for the by the
weights of $E$ and $G$ only; $F$ is a tier $i+1$ cluster, and so its
weight is not included in the density calculation. Finally, we find a
good cycle paid for by $A$ and $C$; since $B$ and $D$ share edges with
this cycle, they all merge to form the large cluster $Y$.

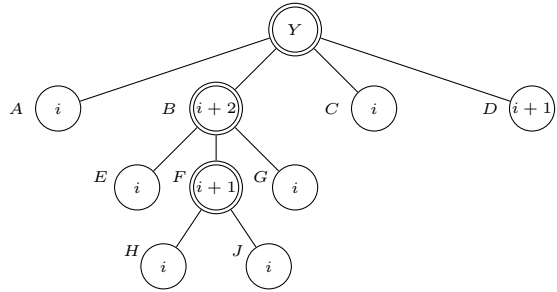
\begin{figure}[tbh]
  \begin{center}
  \begin{tikzpicture}[scale=0.7]
    \tikzstyle{vertex}=[circle,draw,inner sep=0pt,minimum size=6mm];
    \tikzstyle{high}=[circle,draw,inner sep=0pt,minimum size=7mm];

    \tikzstyle{every node}=[font=\tiny];

    \node (y) at (6,5.5) [high] {$Y$};

    \node (a) at (1.5,4) [vertex] {$i$}; \node (b) at (4.5,4) [high] {$i+2$};
    \node (c) at (7.5,4) [vertex] {$i$}; \node (d) at (10.5,4) [vertex] {$i+1$}; 
    \draw (a) -- (y) -- (b); \draw (c) -- (y) -- (d);

    \node at (0.7,4) {$A$}; \node at (3.6,4) {$B$}; \node at (6.7,4) {$C$};
    \node at (9.7,4) {$D$};


    \node (e) at (3,2.5) [vertex] {$i$}; \node (f) at (4.5,2.5) [high] {$i+1$};
    \node (g) at (6,2.5) [vertex] {$i$};
    \draw (e) -- (b) -- (f); \draw (b) -- (g);

    \node at (2.3,2.7) {$E$}; \node at (3.8,2.7) {$F$}; \node at (5.35,2.7) {$G$};

    \node (h) at (3.5,1) [vertex] {$i$}; \node (j) at (5.5,1) [vertex] {$i$};
    \draw (h) -- (f) -- (j); 

    \node at (2.9,1.3) {$H$}; \node at (4.9,1.3) {$J$}; 





    \node at (y) [vertex] {}; \node at (b) [vertex] {}; \node at (f) [vertex] {};

  \end{tikzpicture}
  \end{center}
  \caption{A part of the Tree $T_Y$ corresponding to $Y$, a large
    cluster of type $i$. The number in each vertex indicates the tier
    of the corresponding cluster.  Only final-stage and penultimate
    clusters are shown: final-stage clusters are indicated with a
    double circle; all other clusters are penultimate.}
\end{figure}

An edge of a large cluster $Y$ is said to be a \emph{final edge} if it
is used in a cycle $C$ that produces a final-stage cluster of $Y$. All
other edges of $Y$ are called \emph{penultimate edges}; note that any
penultimate edge is in some penultimate cluster of $Y$. We define the
\emph{final cost} of $Y$ to be the sum of the costs of its final
edges, and its \emph{penultimate cost} to be the sum of the costs of
its penultimate edges; clearly, the cost of $Y$ is the sum of its
final and penultimate costs. We bound the final costs and penultimate
costs separately.

Recall that an edge is a final edge of a large cluster $Y$ if it is
used by {\sc MergeClusters} to form a cycle $C$ in the final iteration
during which $Y$ is formed. The reason we can bound the cost of final
edges is that the cost of any such cycle is at most $\alpha$ times the
weight of clusters contained in the cycle, and a cluster does not
contribute to the weight of more than one cycle in an iteration. (This
is also the essence of Lemma~\ref{lem:tierCost}.) We formalize this
intuition in the next lemma.

\begin{lemma}\label{lem:final}
  The final cost of any large cluster $Y$ is at most $\alpha w_Y$,
  where $w_Y$ is the weight of $Y$.
\end{lemma}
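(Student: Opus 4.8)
The plan is to reduce the statement to a short accounting argument on the forest $\script{F}$ of clusters, and then carry that out. Let $Y$ be a large cluster of type $i$, and let $Z_1,Z_2,\dots,Z_m$ be its final-stage clusters (one of which is $Y$ itself), where $Z_j$ is the cluster produced by the good non-trivial cycle $C_j$ during iteration $i$ of {\sc MergeClusters}. By the definition of final edges, the final edges of $Y$ are exactly those edges of $G$ that lie on some $C_j$, and $C_j\subseteq Z_j\subseteq Y$ for each $j$; since edge costs are nonnegative, the final cost of $Y$ is therefore at most $\sum_{j=1}^{m} cost(C_j)$, so it suffices to bound this sum by $\alpha w_Y$.

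The next step is to use that each $C_j$ was accepted by the while loop of iteration $i$ and is thus \emph{good}, that is, has density at most $\alpha$, where (as stressed in the remarks following the description of {\sc MergeClusters}) the density of $C_j$ is computed using only the weights of those terminals of $C_j$ that correspond to tier $i$ clusters. Writing $W_j$ for the total weight of these tier $i$ terminals lying on $C_j$, we get $cost(C_j)\le\alpha W_j$, so the final cost of $Y$ is at most $\alpha\sum_{j=1}^{m}W_j$, and it remains only to prove $\sum_{j=1}^{m}W_j\le w_Y$.

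The heart of the proof is then two observations about the tree $T_Y$. First, every tier $i$ cluster $X$ whose terminal lies on some $C_j$ is a \emph{penultimate} cluster of $Y$: using $C_j$ merges $X$ into $Z_j$, so $X$ is an immediate child in $T_Y$ of the final-stage cluster $Z_j$; and since every cluster created during iteration $i$ is obtained by merging at least two tier $i$ clusters and so has weight at least $2^{i+1}$, hence tier at least $i+1$, this tier $i$ cluster $X$ must have been created during an earlier iteration --- which is exactly the definition of a penultimate cluster of $Y$. Second, no tier $i$ cluster contributes to two distinct $W_j$'s, because as soon as the terminal representing $X$ is used on a cycle of iteration $i$, the cluster $X$ is merged and that terminal is deleted (or, when $i=0$, marked as a non-terminal), so it cannot lie on any later cycle of iteration $i$. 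Hence $\sum_{j}W_j$ is a sum, with no repetitions, of the weights of some subset of the penultimate clusters of $Y$. Finally, since the weight of every cluster is by definition the sum of the weights of the clusters merged to form it, a straightforward induction down the subtree of $T_Y$ whose internal nodes are the final-stage clusters and whose leaves are the penultimate clusters shows that $w_Y$ equals the total weight of the penultimate clusters of $Y$; therefore $\sum_j W_j\le w_Y$, giving the claimed bound.

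I expect the main obstacle to be exactly the bookkeeping of the third paragraph: one has to check carefully that the tier $i$ clusters feeding $C_1,\dots,C_m$ are pairwise distinct penultimate clusters of $Y$, so that the weights collected in $\sum_j W_j$ are neither double-counted nor attributed to clusters outside $T_Y$, and that these weights telescope correctly through $\script{F}$ from $Y$ down to the penultimate clusters. Once this structure is pinned down, combining it with the per-cycle bound $cost(C_j)\le\alpha W_j$ is immediate; in effect this argument is just the restriction of the induction of Lemma~\ref{lem:tierCost} to the single iteration $i$ in which $Y$ is formed.
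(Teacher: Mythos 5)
Your proof is correct and follows essentially the same argument as the paper: bound each good cycle's cost by $\alpha$ times the total weight of the tier~$i$ clusters whose terminals lie on it, observe via the tree $T_Y$ that these are distinct penultimate clusters so their weights sum to at most $w_Y$, and conclude. The only difference is cosmetic --- you spell out explicitly (via the ``merging at least two tier~$i$ clusters gives weight $\ge 2^{i+1}$'' observation) why clusters formed in iteration~$i$ cannot themselves be of tier~$i$, a fact the paper states in a parenthetical without justification.
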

\begin{proof}
  Let $Y$ be an arbitrary large cluster. In the construction of the
  tree $T_Y$, we associated with each vertex of $T_Y$ the cost of the
  cycle used to form the corresponding cluster. To bound the total
  final cost of $Y$, we must bound the sum of the costs of vertices of
  $T_Y$ associated with final-stage clusters.  The weight of $Y$,
  $w_Y$ is at least the sum of the weights of the penultimate tier $i$
  clusters that become a part of $Y$. Therefore, it suffices to show
  that the sum of the costs of vertices of $T_Y$ associated with
  final-stage clusters is at most $\alpha$ times the sum of the
  weights of $Y$'s penultimate tier $i$ clusters. (Note that a tier
  $i$ cluster must have been formed prior to iteration $i$, and hence
  it cannot itself be a final-stage cluster.)

  A cycle was used to construct a final-stage cluster $X$ only if its
  cost was at most $\alpha$ times the sum of weights of the
  penultimate tier $i$ clusters that become a part of $X$. (Larger
  clusters may become a part of $X$, but they do not contribute weight
  to the density calculation.)  Therefore, if $X$ is a vertex of $T_Y$
  corresponding to a final-stage cluster, the cost of $X$ is at most
  $\alpha$ times the sum of the weights of its tier $i$ immediate
  children in $T_Y$. But $T_Y$ is a tree, and so no vertex
  corresponding to an penultimate tier $i$ cluster has more than one
  parent. That is, the weight of a penultimate cluster pays for only
  one final-stage cluster. Therefore, the sum of the costs of vertices
  associated with final-stage clusters is at most $\alpha$ times the
  sum of the weights of $Y$'s penultimate tier $i$ clusters, and so
  the final cost of $Y$ is at most $\alpha w_Y$.
\end{proof}

\begin{lemma}\label{lem:penultimate}
  If $Y_1$ and $Y_2$ are distinct large clusters of the same type, no
  edge is a penultimate edge of both $Y_1$ and $Y_2$.
\end{lemma}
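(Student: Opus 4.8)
The plan is to derive a contradiction from the tree structure of the implicitly-constructed forest $\script{F}$ together with the invariant (noted among the salient features of {\sc MergeClusters}) that, at any moment during the algorithm, every edge of $G$ lies in at most one \emph{small} cluster. Suppose, for contradiction, that some edge $e$ is a penultimate edge of two distinct large clusters $Y_1$ and $Y_2$, both of type $i$. By the definition of penultimate edges, $e$ lies in a penultimate cluster $X_1$ of $Y_1$ and in a penultimate cluster $X_2$ of $Y_2$.

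The first key observation I would establish is that $X_1$ and $X_2$ are both small clusters that are present at the start of iteration $i$. A penultimate cluster of a type-$i$ large cluster is, by definition, formed before iteration $i$ and is an immediate child in the tree of a final-stage cluster; since final-stage clusters are formed during iteration $i$, such a penultimate cluster is merged (consumed) during iteration $i$. As clusters change only when merged, and large (frozen) clusters are never merged into anything, any cluster consumed during iteration $i$ must be small and must still exist at the start of iteration $i$. Hence $X_1$ and $X_2$ are small clusters, both present at the start of iteration $i$.

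Now I apply the edge-disjointness invariant at the start of iteration $i$: $e$ lies in at most one small cluster then, so from $e \in X_1$ and $e \in X_2$ we conclude $X_1 = X_2$; call this common cluster $X$. Since $X \neq Y_1$ and $X \neq Y_2$ (penultimate clusters are formed strictly before iteration $i$, while $Y_1, Y_2$ are formed during iteration $i$), $X$ is a non-root vertex of $\script{F}$ and hence has a unique parent $Z$. Because $X$ is a penultimate cluster of $Y_1$, its parent $Z$ is a final-stage cluster of $Y_1$, so $Z$ becomes part of $Y_1$; equivalently, the root of the tree in $\script{F}$ containing $Z$ is $Y_1$. Symmetrically, the root of the tree containing $Z$ is $Y_2$. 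But each vertex of a forest lies in a unique tree with a unique root, so $Y_1 = Y_2$, contradicting their distinctness. (We use here that a large cluster always sits at the root of its tree, since once large it is frozen and never acquires a parent.)

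The step requiring the most care — and the one that uses the hypothesis that $Y_1$ and $Y_2$ have the \emph{same} type — is the first observation. If $Y_1$ and $Y_2$ had distinct types $i \neq j$, then $X_1$ would be small at the start of iteration $i$ and $X_2$ small at the start of iteration $j$, two different moments; the ``at most one small cluster per edge'' invariant would then not force $X_1 = X_2$, since an edge may belong to one small cluster at one time and to a different one later, after the first has been absorbed into a large, frozen cluster. It is exactly the common type $i$ that lets us place $X_1$ and $X_2$ as small clusters coexisting at a single instant, which is what makes the argument go through.
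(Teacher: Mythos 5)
Your proof is correct and follows essentially the same approach as the paper: identify the penultimate clusters $X_1, X_2$ containing $e$, observe that both are small and coexist at the start of iteration $i$, and invoke the edge-disjointness of small clusters to force $X_1 = X_2$, contradicting membership in two different trees of $\script{F}$. The paper phrases the final step more tersely (``$X_1$ and $X_2$ must have been merged, so they cannot belong to distinct large clusters''), whereas you spell out the forest argument and where the same-type hypothesis is used; that is a clarification, not a different route.
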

\begin{proof}
  Suppose, by way of contradiction, that some edge $e$ is a
  penultimate edge of both $Y_1$ and $Y_2$, which are large clusters
  of type $i$. Let $X_1$ (respectively $X_2$) be a penultimate cluster
  of $Y_1$ (resp. $Y_2$) containing $e$. As penultimate clusters, both
  $X_1$ and $X_2$ are formed before iteration $i$. But until iteration
  $i$, neither is part of a large cluster, and two small clusters
  cannot share an edge without being merged. Therefore, $X_1$ and
  $X_2$ must have been merged, so they cannot belong to distinct large
  clusters, giving the desired contradiction.
\end{proof}

\begin{theorem}\label{thm:goodLargeCluster}
  After {\sc MergeClusters} terminates, at least one large cluster has
  density at most $O(\log k) \rho$.
\end{theorem}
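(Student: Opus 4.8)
The plan is to bound the density of a single well-chosen large cluster by separately controlling its two kinds of edge cost and then averaging. Recall that the cost of any large cluster $Y$ splits as $cost(Y) = f_Y + p_Y$, where $f_Y$ is its final cost (the cost of edges used in cycles that produce final-stage clusters of $Y$) and $p_Y$ is its penultimate cost. Lemma~\ref{lem:final} already gives $f_Y \le \alpha w_Y$ for every large cluster, so the final edges contribute density at most $\alpha = O(\log k)\rho$ to \emph{every} large cluster; the entire difficulty lies in the penultimate cost, which a priori could be large for some individual clusters.

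The key point for the penultimate cost is that it is ``paid for'' inside small clusters, and small clusters are edge-disjoint, so it can be charged globally rather than cluster by cluster. Concretely, fix a type $i$. By Lemma~\ref{lem:penultimate}, the penultimate edge sets of distinct type-$i$ large clusters are pairwise disjoint, so $\sum_Y p_Y \le cost(G)$, the sum ranging over large clusters $Y$ of type $i$. Summing over the at most $\ceil{\log k}$ possible types yields $\sum_{Y \text{ large}} p_Y \le \ceil{\log k}\cdot cost(G)$.

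To finish, I would combine this with Corollary~\ref{cor:weightLargeClusters}, which gives $\sum_{Y\text{ large}} w_Y \ge \ell/2$. Dividing, and using $\rho = cost(G)/\ell$,
\[
  \frac{\sum_{Y\text{ large}} p_Y}{\sum_{Y\text{ large}} w_Y} \ \le\ \frac{\ceil{\log k}\cdot cost(G)}{\ell/2} \ =\ 2\ceil{\log k}\,\rho \ =\ \alpha .
\]
Since every large cluster has $w_Y \ge k > 0$, a standard averaging argument over this ratio of sums produces a large cluster $Y$ with $p_Y \le \alpha w_Y$. For that cluster, $cost(Y) = f_Y + p_Y \le 2\alpha w_Y$, so its density is at most $2\alpha = 4\ceil{\log k}\rho = O(\log k)\rho$, which is exactly the claim.

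The only real obstacle is getting the final/penultimate bookkeeping right. One must not double-count an edge that is penultimate for large clusters of two \emph{different} types, which is why the global penultimate bound carries the factor $\ceil{\log k}$; and one must invoke Lemma~\ref{lem:final} (not an edge-disjointness argument) for the final edges, because once an edge lies inside a large, frozen cluster the edge-disjointness of small clusters no longer applies and such an edge can sit in many large clusters simultaneously. Everything else — the averaging step and the arithmetic with $\alpha$ — is routine.
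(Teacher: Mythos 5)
Your proposal is correct and follows essentially the same route as the paper: split each large cluster's cost into final and penultimate parts, bound the final part per-cluster via Lemma~\ref{lem:final}, bound the total penultimate cost globally via Lemma~\ref{lem:penultimate} and the $\ceil{\log k}$ types, divide by Corollary~\ref{cor:weightLargeClusters}, and average. The only cosmetic difference is that you state the final bound as $2\alpha$ while the paper writes $\alpha + 2\ceil{\log k}\rho$, but these are identical since $\alpha = 2\ceil{\log k}\rho$.
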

\begin{proof}
  We define the \emph{penultimate density} of a large cluster to be
  the ratio of its penultimate cost to its weight.

  Consider the total penultimate costs of all large clusters: For any
  $i$, each edge $e \in E(G)$ can be a penultimate edge of at most 1
  large cluster of type $i$. This implies that each edge can be a
  penultimate edge of at most $\ceil{\log k}$ clusters. Therefore, the
  sum of penultimate costs of all large clusters is at most
  $\ceil{\log k} cost(G)$. Further, the total weight of all large
  clusters is at least $\ell/2$.  Therefore, the (weighted) average
  penultimate density of large clusters is at most $2 \ceil{\log k}
  \frac{cost(G)}{\ell} = 2 \ceil{\log k} \rho$, and hence there exists a
  large cluster $Y$ of penultimate density at most $2 \ceil{\log k}
  \rho$.

  The penultimate cost of $Y$ is, therefore, at most $2 \ceil{\log k}
  \rho w_Y$, and from Lemma~\ref{lem:final}, the final cost of $Y$
  is at most $\alpha w_Y$.  Therefore, the density of $Y$ is at most
  $\alpha + 2 \ceil{\log k} \rho = O(\log k) \rho$.
\end{proof}

Theorem~\ref{thm:goodLargeCluster} and Lemma \ref{lem:segment}
together imply that we can find a solution to the rooted \kv problem
of cost at most $O(\log k) \rho k + 2L$. This completes our proof of
Theorem~\ref{thm:avekv}.

\section{Conclusions}
\label{sec:conclusion}

We list the following open problems:
\begin{itemize}
\item Can the approximation ratio for the \kv problem be improved from
  the current $O(\log \ell \log k)$ to $O(\log n)$ or better? Removing
  the dependence on $\ell$ to obtain even $O(\log^2 k)$ could be
  interesting. If not, can one improve the approximation ratio for the
  easier \ke problem?

\item Can we obtain approximation algorithms for the \kvc{\lambda} or
  \kec{\lambda} problems for $\lambda > 2$? In general, few results
  are known for problems where vertex-connectivity is required to be
  greater than 2, but there has been more progress with higher
  edge-connectivity requirements.

\item Given a 2-connected graph of density $\rho$ with some vertices
  marked as terminals, we show that it contains a non-trivial cycle
  with density at most $\rho$, and give an algorithm to find such a
  cycle. We have also found an $O(\log \ell)$-approximation for the
  problem of finding a minimum-density non-trivial cycle. Is there a
  constant-factor approximation for this problem? Can it be solved
  \emph{exactly} in polynomial time?
\end{itemize}

\medskip
\noindent
\textbf{Acknowledgments:} We thank Mohammad Salavatipour for helpful
discussions on \ke and related problems. We thank Erin Wolf Chambers
for useful suggestions on notation.

\bibliographystyle{plain}

\end{document}